\theoremstyle{plain}
\newtheorem{lemma}{Lemma}
\newtheorem{theorem}{Theorem}
\newtheorem{proposition}{Proposition}
\theoremstyle{definition}
\newtheorem{transform}{Transformation}
\newtheorem{definition}{Definition}
\theoremstyle{remark}
\newtheorem{example}{Example}
\newtheorem{remark}{Remark}
\tikzset{>=stealth}
\tikzstyle{node} = [circle, minimum size = 1.4mm, inner sep = 0mm, color=black, fill]
\tikzstyle{hyperedge} = [rectangle, minimum width = 5mm, minimum height = 5mm, draw, inner sep = 0mm]
\tikzstyle{HG} = [align = center]
\tikzstyle{circledge} = [circle, minimum size = 7mm, inner sep = 0mm, color=black, draw]
\newcommand\scale{1.7}
\title{Weak Greibach Normal Form for Hyperedge Replacement Grammars}
\author{Tikhon Pshenitsyn
	\institute{Department of Mathematical Logic and Theory of Algorithms\\Faculty of Mathematics and Mechanics\\
		Lomonosov Moscow State University
		\\GSP-1, Leninskie Gory, Moscow, 119991, Russian Federation}
	\thanks{The study was funded by RFBR, project number 20-01-00670.}
	\email{tpshenitsyn@lpcs.math.msu.su}
}
\begin{document}
	\maketitle
\begin{abstract}
	It is known that hyperedge replacement grammars are similar to string context-free grammars in the sense of definitions and properties. Therefore, we expect that there is a generalization of the well-known Greibach normal form from string grammars to hypergraph grammars. Such generalized normal forms are presented in several papers; however, they do not cover a large class of hypergraph languages (e.g.~languages consisting of star graphs). In this paper, we introduce a weak Greibach normal form, whose definition corresponds to the lexicalized normal form for string grammars, and prove that every context-free hypergraph language (with nonsubstantial exceptions) can be generated by a grammar in this normal form. The proof presented in this paper generalizes a corresponding one for string grammars with a few more technicalities. 
\end{abstract}
\section{Introduction}
	Extensions of formal grammars and languages from strings to graphs are considered in a wide variety of works. The resulting formalisms called graph grammars generate graph languages by means of productions: each production is of the form $A\to G$; it allows one to replace a part of a graph labeled by $A$ with the graph $G$ if certain conditions are satisfied. An overview on graph grammars can be found in ``Handbook on Graph Grammars and Computing by Graph Transformation'' \cite{Rozenberg97}.
	
	In this paper, we focus on a particular approach called hyperedge replacement grammar (HRG). An overview on HRG can be found in the book \cite{Habel92} or in a chapter of the handbook mentioned above \cite{Drewes97}. In brief, a hyperedge replacement grammar contains productions that allow one to replace an edge with a certain label by a hypergraph; the rest is similar to string context-free grammar definitions. HRGs appeared in the seventies, and they became popular from theoretical and practical points of view; particularly, they can be used in machine translation or programming (see, for instance, \cite{Jones12}). Many structural properties of HRGs have been studied. Nicely, many theorems regarding context-free string grammars (CFGs) can be extended to HRGs in a natural way; besides, it is often the case that proofs for CFGs can be generalized to HRGs so there is no need to invent anything new.
	
	For context-free grammars, there is a so-called Greibach normal form, which is helpful in a number of investigations, e.g.~to connect CFGs with categorial grammars (see \cite{BarHillel60}). Each production in a grammar in the Greibach normal form has to begin with a terminal symbol and proceed with at most two nonterminal ones. This definition can be weakened: one says that a grammar is in the weak Greibach normal form (it is often called \emph{lexicalized}) if there is exactly one terminal symbol in the right-hand side of each production. Obviously, the same definition can be introduced for graphs. Namely, an HRG is said to be in the weak Greibach normal form if for each production $A\to H$ the hypergraph $H$ has exactly one edge labeled by a terminal symbol.
	
	The main objective of this paper is to present a precise class of graph languages that can be generated by HRGs in the weak Greibach normal form. In order to do this we generalize the proof of existence of the Greibach normal form for context-free grammars  taken from \cite{Aho72} in a straightforward way. However, several differences exist. Firstly, there is a nonsubstantial class of hypergraph context-free languages that cannot be generated by grammars in the weak Greibach normal form (a problem arises because of isolated nodes). Secondly, there is a trick in the proof for CFGs that exploits the string nature of grammars, so it is hard to generalize it to hypergraphs. This led us to a large and technically heavy proof. Finally, there are bad news regarding algorithmical complexity: conversion of an HRG into an equivalent one in the weak Greibach normal form cannot be done even in exponential time.
	
	In Sect. \ref{RW} we compare our contribution with other studies related to the Greibach normal form for graph grammars. In Sect. \ref{s_defn} formal definitions related to hypergraphs and HRGs are presented. In Sect. \ref{s_GNF} the weak Greibach normal form is introduced, related issues are discussed. Sect. \ref{s_proof} is devoted to the main theorem. In Sect. \ref{s_compl} we discuss algorithmic complexity of the normalization procedure. In Sect. \ref{s_concl} we conclude.
\section{Related Work}\label{RW}
There are several works devoted to the Greibach normal form for graph grammars. The paper of Joost Engelfriet \cite{Engelfriet92} establishes that HRGs that produce \emph{languages of bounded degree} are equivalent to \emph{apex HRGs}; this result generalizes \emph{the double Greibach normal form} for CFGs. To recall, a language $L$ is of bounded degree if for some $M$ all nodes of all hypergraphs in $L$ have degree not exceeding $M$. Obviously, there are substantive examples of languages of unbounded degree: for instance, the language of star graphs, which have one node and arbitrarily many edges outgoing from it. Besides, the property of being apex is stronger than the weak Greibach normal form we are interested in.

In the paper of Christina Jansen et al. \cite{Jansen11} \emph{the local Greibach normal form} is presented. The authors prove that \emph{data structure grammars} (it is a specific kind of grammars that generate so-called heap configurations) can be transformed into grammars in the local Greibach normal form; after the proof the authors point out that the normalization can be generalized to HRGs of bounded degree. However, the authors also note that their procedure being algorithmically efficient cannot be generalized to all HRGs.

The weak Greibach normal form in the sense we are interested in is introduced for another type of graph grammars. Namely, in \cite{Engelfriet90.2} it is proved that each B-eNCE grammar (here we do not consider definitions regarding this formalism) is equivalent to a B-eNCE grammar in the weak Greibach normal form. In \cite{Engelfriet90} it is shown that B-edNCE grammars and HRGs have the same recognizing power in some sense (namely, since B-edNCE grammars produce usual graphs with labeled nodes and edges while HRGs produce hypergraphs with labeled edges, the result is established w.r.t. two translations, from graphs to hypergraphs and vice versa). However, it seems to be impossible that these two results can be combined in order to obtain the normal form for HRGs.

Therefore, to our best knowledge, the question of whether each HRG is equivalent to an HRG in the weak Greibach normal form has hitherto remained open. We answer it in this paper.
\section{Preliminaries}\label{s_defn}
This section is concerned with definitions related to hypergraphs. All of them are taken from \cite{Drewes97}. Note that we use a slightly different notation from that in the handbook mentioned above.
\subsection{Hypergraphs, Sub-hypergraphs}\label{hyp_def}

$\mathbb{N}$ includes $0$. 	The set of integers from $1$ to $n$ is denoted by $[1,n]$. 

It is convenient for us to use the following notation: if $\{i_1,\dots,i_k\}$ is an indexed set of integers such that for $m<n\;i_m<i_n$ holds, then it is called index-ordered and the set is denoted as $\{i_1,\dots,i_k\}_{IO}$. 
\\
The set $\Sigma^*$ is the set of all strings over the alphabet $\Sigma$ including the empty string $\varepsilon$. The length $|w|$ of the word $w$ is the number of positions in $w$. The $i$-th symbol in $w$ is denote by $w(i)$ ($1\le i\le |w|$). $\Sigma^+$ denotes the set of all nonempty strings. The set $\Sigma^\circledast$ is the set of all strings consisting of distinct symbols. The set of all symbols contained in the word $w$ is denoted by $[w]$. If $f:\Sigma\to\Delta$ is a function from one set to another, then it is naturally extended as a function $f:\Sigma^*\to\Delta^*$ ($f(\sigma_1\dots\sigma_k)=f(\sigma_1)\dots f(\sigma_k)$).

Let $C$ be some fixed set of labels, for which the function $type: C\to \mathbb{N}$ is considered.
\begin{definition}\label{hypergraph}
	\emph{A hypergraph over $C$} is a tuple $G=\langle V, E, att, lab, ext \rangle$ where $V$ is the set of \emph{nodes}, $E$ is the set of \emph{hyperedges}, $att: E\to V^\circledast$ assigns an ordered set of \emph{attachment} nodes to each edge, $lab: E \to C$ labels each hyperedge by some element of $C$ in such a way that $type(lab(e))=|att(e)|$ whenever $e\in E$, and $ext\in V^\circledast$ is an ordered set of \emph{external} nodes. 
	
	Components of a hypergraph $G$ are denoted by $V_G, E_G, att_G, lab_G, ext_G$.
\end{definition}
In the remainder of the paper, hypergraphs are simply called graphs, and hyperedges are simply called edges. The set of all graphs with labels from $C$ is denoted by $\mathcal{H}(C)$. 
In this work, figures contain graph drawings and sketches. In drawings, nodes are depicted by black dots, edges are denotes as labeled boxes, $att$ is represented with numbered lines (called ``tentacles''), external nodes are depicted by numbers in brackets. If an edge has type 2, it is depicted by an arrow. If we are not interested in a specific form of a graph but we want to have a closer look at some its part, we depict the whole graph as an area but draw its part of interest in detail.
\begin{example}
	This is a graph:
	\begin{center}
		{\tikz[baseline=.1ex]{
				\node[] (R) {};
				\node[node,above right=2mm and 0mm of R,label=above:{\scriptsize $(1)$}] (N1) {};
				\node[node,below=5.5mm of N1] (N2) {};
				\node[hyperedge,right=5.5mm of N1] (E) {$s$};
				\node[node,below=3.7mm of E] (N3) {};
				\node[node,right=5.5mm of E,label=above:{\scriptsize $(3)$}] (N4) {};
				\node[node,below=5.5mm of N4,label=below:{\scriptsize $(2)$}] (N5) {};

				\draw[->,black] (N1) -- node[left] {\small $p$} (N2);
				\draw[->,black] (N2) -- node[below] {\small $p$} (N3);
				\draw[-,black] (N3) -- node[right] {\scriptsize 1} (E);
				\draw[-,black] (N1) -- node[above] {\scriptsize 2} (E);
				\draw[-,black] (N4) -- node[above] {\scriptsize 3} (E);
		}}
	\end{center}
\end{example}
If $G$ is a graph, and $e\in E_G$ is labeled by $a$, then $G$ can be denoted by $G(e:a)$.
\begin{definition}\label{type}
	The function $type$ (or $type_G$ to be exact) returns the number of nodes attached to some edge in a graph $G$: $type_G(e):=|att_G(e)|$.
	If $G$ is a graph, then $type(G):=|ext_G|$.
\end{definition}
\begin{definition}
	A sub-hypergraph (or just subgraph) $H$ of a hypergraph $G$ is a hypergraph such that $V_H\subseteq V_G$, $E_H\subseteq E_G$, and for all $e\in E_H$ $att_H(e)=att_G(e)$, $lab_H(e)=lab_G(e)$.
\end{definition}
\begin{definition}
	If $H=\langle \{v_i\}_{i=1}^n,\{e_0\},att,lab,v_1\dots v_n\rangle$, $att(e_0)=v_1\dots v_n$ and $lab(e_0)=a$, then $H$ is called \emph{a handle}. It is denoted by $\circledcirc(a)$.
\end{definition}
\begin{definition}
	\emph{An isomorphism} between graphs $G$ and $H$ is a pair of bijective functions $\mathcal{E}: E_G\to E_H$, $\mathcal{V}: V_G\to V_H$ such that $att_H\circ\mathcal{E}=\mathcal{V}\circ att_G$, $lab_G=lab_H\circ\mathcal{E}$, $\mathcal{V}(ext_G)=ext_H$. In this work, we do not distinguish between isomorphic graphs.
\end{definition}

\subsection{Replacement}	
This procedure is defined in \cite{Drewes97}. In short, the replacement of an edge $e_0$ in $G$ with a graph $H$ can be done if $type(e_0)=type(H)$ as follows:
\begin{enumerate}
	\item Remove $e_0$;
	\item Insert an isomorphic copy of $H$ (namely, $H$ and $G$ have to consist of disjoint sets of nodes and edges);
	\item For each $i$, fuse the $i$-th external node of $H$ with the $i$-th attachement node of $e_0$.
\end{enumerate}
To be more precise, the set of edges in the resulting graph is $(E_G\setminus\{e_0\})\cup E_H$, and the set of nodes is $V_G\cup (V_H\setminus ext_H)$. The result is denoted by $G[H/e_0]$.

\subsection{Hyperedge Replacement Grammars}
\begin{definition}
	A \emph{hyperedge replacement grammar (HRG)} is a tuple $HGr=\langle N, \Sigma, P, S\rangle$, where $N$ is a finite alphabet of nonterminal symbols, $\Sigma$ is a finite alphabet of terminal symbols ($N\cap\Sigma=\emptyset$), $P$ is a set of productions, and $S\in N$. Each production is of the form $A\to H$ where $A\in N$, $H\in\mathcal{H}(N\cup \Sigma)$ and $type(A)=type(H)$. For $\pi=A\to H$ we denote $lhs(\pi)=A,rhs(\pi)=H$.
\end{definition}

Edges labeled by terminal (nonterminal) symbols are called \emph{terminal edges} (\emph{nonterminal edges} resp.). If a graph contains terminal edges only, it is called \emph{terminal}.

If $G$ is a graph, $e_0\in E_G$, $lab(e_0)=A$ and $\pi=A\to H\in P$, then $G$ directly derives $G[H/e_0]$ (we write $G\Rightarrow G[H/e_0]$ or $G\underset{\pi}{\Rightarrow} G[H/e_0]$). The transitive reflexive closure of $\Rightarrow$ is denoted by $\overset{\ast}{\Rightarrow}$. If $G \overset{\ast}{\Rightarrow} H$, then $G$ is said to derive $H$. The corresponding sequence of production applications is called a derivation.
\begin{definition}
	The \emph{language generated by an HRG $HGr=\langle N, \Sigma, P, S\rangle$} is the set of graphs $H\in \mathcal{H}(\Sigma)$ such that $\circledcirc(S)\overset{\ast}{\Rightarrow} H$. It is denoted by $L(HGr)$.
	
	A graph language $L$ is said to be \emph{a hypergraph context-free language (HCFL)} if it is generated by some HRG.
	
	Two grammars are said to be equivalent if they generate the same language.
\end{definition}
Further we simply write $A \overset{\ast}{\Rightarrow} G$ instead of $\circledcirc(A) \overset{\ast}{\Rightarrow} G$.

\section{Weak Greibach Normal Form}\label{s_GNF}
In this section we introduce the formal definition of the normal form we are interested in, and establish a simple property of languages generated by grammars in this normal form.
\begin{definition}
	An HRG $HGr$ is in \emph{the weak Greibach normal form (WGNF)} if there is exactly one terminal edge in the right-hand side of each production. Formally, $\forall (X\to H)\in P_{HGr}$ $ \exists!e_0\in E_H:lab_H(e_0)\in \Sigma_{HGr}$.
\end{definition}
\begin{example}
	The grammar $HGr_1=\langle\{S\},\{a\},P_1,S\rangle$ is in the WGNF where $P_1$ contains the following rules ($type(S)=type(a)=1$):
	\begin{center}
		\begin{tikzpicture}
		\node (S) {$S\to$};
		\node[node, above right=0.5mm and 9mm of S,label=above:{\small $(1)$}] (v0) {};
		\node[hyperedge,below left=5mm and 2mm of v0] (e1) {$S$};
		\draw[-,black] (v0) -- node[left] {\small 1} (e1);
		\node[hyperedge,below right=5mm and 2mm of v0] (e2) {$a$};
		\draw[-,black] (v0) -- node[right] {\small 1} (e2);
		\node[right=36mm of S] (S2) {$S\to$};
		\node[node, above right=0.5mm and 6mm of S2,label=above:{\small $(1)$}] (v01) {};
		\node[hyperedge,below=5mm of v01] (e11) {$a$};
		\draw[-,black] (v01) -- node[left] {\small 1} (e11);
		\end{tikzpicture}
	\end{center}
	This grammar generates the language of star 1-edged $a$-labeled hypergraphs, which is unbounded.
\end{example}
\begin{remark}\label{rem_terminal_edges}
	If $\langle N,\Sigma,P,S\rangle$ is in the WGNF and $X\overset{k}{\Rightarrow} G$ for some $X\in N$, then $G$ has exactly $k$ terminal edges (each production adds exactly one terminal edge).
\end{remark}
Note that not each hypergraph context-free language can be generated by some HRG in the WGNF. This follows from 
\begin{example}\label{ex_nonWGNF}
	Consider the HRG $HGr_2=\langle\{T\},\{b\},P_2,T\rangle$ where $P_2$ contains the following rules \\($type(T)=type(b)=0$):
	\begin{center}
		\begin{tikzpicture}
		\node (T) {$T\to$};
		\node[hyperedge,above right=-3mm and 2mm of T] (e1) {$T$};
		\node[node, below=3mm of e1] (v0) {};
		\node[right=30mm of T] (T2) {$T\to$};
		\node[hyperedge,right=2mm of T2] (e2) {$b$};
		\end{tikzpicture}
	\end{center}
	The first production just adds an isolated node; thus this grammar produces graphs that have exactly one edge labeled by $b$ and arbitrarily many isolated nodes.
	If there is an equivalent $HGr^\prime=\langle N,\{b\},P^\prime,S^\prime\rangle$ in the WGNF, then each right-hand side of each production in $P^\prime$ contains exactly one terminal edge. Note that if $S^\prime\overset{k}{\Rightarrow} G,G\in\mathcal{H}(\{b\})$ for some $k$ in $HGr^\prime$, then $G$ has $k$ terminal edges (Remark \ref{rem_terminal_edges}); hence $k$ has to be equal to $1$ and therefore $S^\prime\to G\in P^\prime$. However, there are infinitely many graphs in $L(HGr)$ while $|P^\prime|<\infty$.
	\qed
\end{example}

It is seen that the reason of this problem is in isolated nodes. In the string case if a language contains the empty word, then it cannot be generated by a grammar in the WGNF due to obvious reasons: each production adds at least one terminal symbol so it is impossible to produce the empty word. In the case of graphs, we also have to prohibit somehow undesired languages with ``too many'' isolated nodes occuring in graphs.

Below we describe the precise class of languages generated by grammars in the weak Greibach normal form.
We denote by $esize(G)$ the number of edges in $G$, and by $isize(G)$ the number of isolated nodes in $G$.
\begin{definition}
	An HCFL $L$ is said to be \emph{isolated-node bounded} (ibHCFL) if there is a constant $M$ such that, for each $G\in L$, $isize(G)< M*esize(G)$. An HRG $HGr$ is called isolated-node bounded if it generates an ibHCFL.
\end{definition}
\begin{theorem}\label{Theorem}
	Each HRG $HGr=\langle N,\Sigma,P,S\rangle$ in the WGNF generates an ibHCFL.
\end{theorem}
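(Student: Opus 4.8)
The plan is to exploit Remark~\ref{rem_terminal_edges}, which ties the number of terminal edges of a derived graph to the length of its derivation, and then to bound the total number of nodes (hence of isolated nodes) linearly in that length.

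First I would observe that every $G\in L(HGr)$ is terminal, so $esize(G)$ equals the total number of its edges, all of which are terminal. If $\circledcirc(S)\overset{k}{\Rightarrow} G$ is a derivation, then by Remark~\ref{rem_terminal_edges} $G$ has exactly $k$ terminal edges; hence $esize(G)=k$, i.e. the number of derivation steps equals $esize(G)$. Moreover $esize(G)\ge 1$: a zero-step derivation would give $G=\circledcirc(S)$, whose unique edge is labeled by the nonterminal $S$, contradicting $G\in\mathcal{H}(\Sigma)$.

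Next I would bound the number of nodes. Let $c:=\max_{(A\to H)\in P}|V_H|$, which is finite since $P$ is finite. A single replacement $G'[H/e_0]$ only removes the edge $e_0$ and fuses the external nodes of $H$ with the attachment nodes of $e_0$; it never deletes a node and adds exactly the internal nodes $V_H\setminus ext_H$, at most $c$ of them. Since the initial handle $\circledcirc(S)$ has exactly $type(S)$ nodes and each of the $k=esize(G)$ steps contributes at most $c$ new nodes, an easy induction on the length of the derivation yields $|V_G|\le type(S)+c\cdot esize(G)$.

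Finally, since $isize(G)\le |V_G|$ and $esize(G)\ge 1$, I would take $M:=type(S)+c+1$ and conclude $isize(G)\le type(S)+c\cdot esize(G)\le (type(S)+c)\cdot esize(G)<M\cdot esize(G)$, which is exactly the required isolated-node bound. There is no serious obstacle here: the only points needing care are ruling out $esize(G)=0$, so that the additive constant $type(S)$ can be absorbed into the multiplicative bound and the inequality made strict, and noting that fusion of external nodes can only decrease the node count, so the per-step bound $c$ is a genuine upper bound. The essential content has already been supplied by Remark~\ref{rem_terminal_edges}; the rest is bookkeeping.
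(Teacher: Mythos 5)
Your proof is correct and follows essentially the same strategy as the paper's: use Remark~\ref{rem_terminal_edges} to equate the number of derivation steps with $esize(G)$, then bound the relevant node count by a per-production constant via induction on the derivation length. The only (harmless) difference is that you bound the total number of nodes by $\max_{(A\to H)\in P}|V_H|$ per step, whereas the paper bounds the newly isolated nodes by $\max_{(A\to H)\in P}isize(H)$ per step; your cruder bound in fact sidesteps the small subtlety of deciding which fused nodes become isolated.
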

\begin{proof}
	Let $M=\max_{A\to H\in P}\{isize(H)\}+1$. Define $esize_T(G)$ as the number of terminal edges in $G$. We prove by induction on $k$ that for each graph $G$ such that $X\overset{k}{\Rightarrow} G, X\in N$ $isize(G)< M*esize_T(G)$. Remark \ref{rem_terminal_edges} implies that $esize_T(G)=k$.	
	
	\textbf{Basis.} If $k=0$, then $G=\circledcirc(X)$, and the statement is trivial ($0<M$).
	
	\textbf{Step.} Let $X\overset{k-1}{\Rightarrow} H \Rightarrow G$. By the induction hypothesis, $isize(H)< M*esize_T(H)$. The number of isolated nodes appeared at the last step does not exceed $M$ by the definition of $M$. Then $isize(G)< M*esize_T(H)+M=M*esize_T(G)$.
	
	Note finally that $esize_T(G)\le esize(G)$.
\end{proof}

The other direction of this statement is of central interest in this paper. The next section is devoted to it.

\section{Transformation of HRGs Generating IBHCFLs into HRGs in the Weak Greibach Normal Form}\label{s_proof}
The main theorem we are going to prove is the following:
\begin{theorem}\label{THEOREM}
	Each ibHCFL $L$ can be generated by an HRG in the WGNF.
\end{theorem}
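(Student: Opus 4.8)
The plan is to mimic the Greibach-normal-form construction for context-free grammars from \cite{Aho72}, reading `the leftmost symbol of a sentential form' as `a chosen edge of the current graph' and carrying the attachment- and external-node bookkeeping through every step. The whole argument is organized around a single goal: to obtain an equivalent grammar in which every right-hand side contains a distinguished nonterminal edge that is guaranteed to produce a terminal edge in one step. Concretely, for each terminal $a$ I introduce a nonterminal $X_a$ with the one production $X_a\to\circledcirc(a)$, and I aim at a grammar whose productions are either these lexical ones or \emph{internal} ones whose right-hand sides contain at least one $X_a$-edge. Once this is reached, replacing in each internal production exactly one $X_a$-edge by the handle $\circledcirc(a)$ turns it into a right-hand side with exactly one terminal edge, so the grammar is in the WGNF.

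First I would put the grammar into a convenient shape. I discard non-reachable and non-productive nonterminals, eliminate nullable nonterminals (those deriving a graph with no edges) in the usual way, and `separate' the terminal edges: every terminal edge occurring in a right-hand side is replaced by an $X_a$-edge, so that after this step terminal edges occur only in the lexical productions $X_a\to\circledcirc(a)$, while all other (internal) productions carry only nonterminal edges and isolated nodes. The task then reduces to forcing each internal production to contain at least one $X_a$-edge.

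The heart of the proof, and the step that genuinely departs from the string case, is this forcing. I would fix an ordering $A_1,\dots,A_n$ of the non-lexical nonterminals and process them as in \cite{Aho72}: for increasing $i$, substitute into the $A_i$-productions the productions of the lower-indexed nonterminals that occur in them (substitution being just the replacement operation $G[H/e_0]$, which causes no trouble), and then remove the resulting self-recursion of $A_i$ by introducing fresh nonterminals. For strings this works because a sentential form has a well-defined leftmost symbol, so the relation `$A_i$ rewrites to something beginning with $A_j$' is canonical and recursion means genuine left recursion $A_i\Rightarrow A_i\gamma$. A graph has no leftmost edge; I must therefore fix, once and for all, a canonical occurrence of a nonterminal edge in each right-hand side to play the role of the leftmost symbol, and the fresh nonterminals created while breaking recursion have to remember the attachment interface of the subgraph that has been pulled away. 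This forces careful control of the types of the new nonterminals, of the fusion of external nodes under replacement, and of how isolated nodes are inherited across the transformation. Verifying that each such transformation preserves the generated language is the large and technically heavy part of the proof.

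It remains to dispose of the purely isolated-node productions (right-hand sides with no edges at all), which are exactly the obstruction of Example~\ref{ex_nonWGNF} and which no substitution can supply with an $X_a$-edge. This is the only place the hypothesis that $L$ is isolated-node bounded is used. By Remark~\ref{rem_terminal_edges} a WGNF derivation of length $k$ produces exactly $k$ terminal edges, whereas the bound $isize(G)<M*esize(G)$ forces every $G\in L$ to contain fewer than a fixed constant times as many isolated nodes as edges; hence boundedly many isolated nodes suffice per terminal edge, leaving enough room to emit every isolated node together with a terminal edge in a lexical production instead of in a standalone one. I expect the recursion-elimination of the previous paragraph to be the main obstacle, with this amortized redistribution of isolated nodes a close second; by contrast the reduction, the separation of terminal edges, and the final replacement of one $X_a$-edge by a handle are routine adaptations of the string proof carried out through the replacement operation.
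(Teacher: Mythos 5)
Your overall architecture coincides with the paper's: preprocessing, a fixed ``distinguished edge'' playing the role of the leftmost symbol, an Aho--Ullman ordering with recursion-breaking via fresh nonterminals that record the attachment interface, back-substitution, and a final adjustment so that exactly one terminal edge remains per right-hand side. But there are two concrete gaps. First, you never eliminate \emph{chain} productions (right-hand sides consisting of a single non-lexical nonterminal edge plus isolated nodes). These are not optional: the recursion-breaking step inverts a recursive production $A\to R$ by deleting its distinguished $A$-labelled edge from $R$ and re-exposing what remains, so $R$ must contain at least two edges --- otherwise the inverted production is edgeless and you have reintroduced exactly the obstruction you claim to have disposed of. Eliminating chain productions in the graph setting requires knowing that the set of one-edge graphs derivable from a given nonterminal is \emph{finite}, and that finiteness is a second, independent use of isolated-node boundedness; your assertion that the hypothesis is used in only one place is therefore false.

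Second, your mechanism for edgeless right-hand sides --- an ``amortized redistribution'' of isolated nodes onto lexical productions --- is not a procedure as stated. The lexical production $X_a\to\circledcirc(a)$ is applied at a point in the derivation that has no knowledge of how many isolated nodes the surrounding derivation still owes, and the number of isolated nodes attributable to a given terminal edge is a global property of the derived graph, not a local one. What actually makes the elimination work is again finiteness: since the grammar has no useless symbols and the language is isolated-node bounded, each nonterminal derives only finitely many distinct edgeless graphs, so every production $B\to G$ can be replaced by the finitely many variants obtained by substituting, for each nonterminal edge of $G$, either a handle (no change) or one of those edgeless graphs. The bound $isize(G)<M*esize(G)$ is the right ingredient, but it must be deployed to bound these substitution sets, not to relocate isolated nodes onto terminal-emitting rules. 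Your recursion-elimination sketch (distinguished edges, interface-remembering fresh nonterminals --- realized in the paper as symbols $(A,f,g)$ with partial bijections tracking external nodes) is the right idea, but it remains a sketch of precisely the part where all the work lies.
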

In other words, this theorem states that each context-free hypergraph language satisfying the isolated-node boundedness property can be generated by a grammar in the weak Greibach normal form. 

Structurally, the proof of this theorem is based on the corresponding one for string context-free grammars. Let us recall the mains steps of the latter in brief (see details in \cite{Aho72}). 

\textit{The input of the algorithm is a context-free grammar $CFG$ \emph{that does not generate the empty word} (this property is related to isolated-node boundedness). The desired output is an equivalent grammar in the weak Greibach normal form (i.e.~in which each production is of the form $A\to aA_1\dots A_n$ where $a$ is terminal while $A_1,\dots,A_n$ are nonterminal).}
\begin{enumerate}
	\item Useless rules and symbols, $\varepsilon$-rules (i.e.~rules of the form $A\to\varepsilon$) and chain rules (i.e.~rules of the form $A\to B$ for $A,B$ being nonterminal) are eliminated.
	\item\label{recprodstr} It is shown how to eliminate recursive $A$-productions, i.e.~productions of the form $A\to A\alpha$ for some fixed nonterminal symbol $A$. The trick is to move $A$ from the left side of $A\alpha$ to the right side. It is done as follows: if $A\to A\alpha_1|\dots|A\alpha_m|\beta_1|\dots|\beta_p$ are all the $A$-productions (here $|$ means enumeration of productions), then they are replaced by the productions $A\to \beta_1 A^\prime|\dots|\beta_pA^\prime|\beta_1|\dots|\beta_p$ and $A^\prime\to\alpha_1|\dots|\alpha_m|\alpha_1 A^\prime|\dots|\alpha_mA^\prime$.
	\item Left recursion is completely eliminated: nonterminals are numbered as $A_1,\dots,A_n$ and then a procedure involving application of step \ref{recprodstr} is done such that its result is a grammar where each production is either of the form $A_i\to A_j\alpha$ for $i<j$ or of the form $A_i\to b\alpha$ for $b$ being terminal.
	\item By taking compositions of the above productions and adding new nonterminal symbols the grammar is normalized. The resulting grammar includes rules of the form $A\to b\alpha$ ($b$ is terminal, $\alpha$ is a string of nonterminals) only.
\end{enumerate}

Our goal is to recreate this proof for HRGs. Note that steps 2-4 of the above plan actively exploit string nature of CFGs: transformations of productions are based on movements of symbols from the leftmost position to the rightmost one. Graphs in general do not have leftmost positions unlike strings, so we are going to distinguish arbitrary edges in graphs so they play the role of ``the leftmost symbol''. It is done by means of $\delta$ (see the proof below).

In the proof we use the following
\begin{lemma}\label{lemma_repl}
		If $\pi=A\to G(e_0:B)$ is a production of a grammar $HGr=\langle N,\Sigma,P,S\rangle$ for $A,B$ being nonterminal, and $B\to H_1,\dots,B\to H_k$ are all the productions in $HGr$ with $B$ in the left-hand side, then replacing $\pi$ by productions $A\to G[H_1/e_0],\dots,A\to G[H_k/e_0]$ does not change the language generated.
\end{lemma}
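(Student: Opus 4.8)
The plan is to prove a substitution (or ``unfolding'') lemma: that replacing a single nonterminal edge occurrence by all possible right-hand sides of that nonterminal preserves the generated language. This is the hypergraph analogue of the classical CFG fact that if $A \to \alpha B \beta$ and $B \to \gamma_1 \mid \dots \mid \gamma_k$ are all $B$-productions, then substituting yields an equivalent grammar.

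Let $HGr = \langle N, \Sigma, P, S\rangle$ and let $HGr'$ be the grammar obtained by deleting $\pi = A \to G(e_0:B)$ and adding $A \to G[H_i/e_0]$ for $i = 1,\dots,k$. I would show $L(HGr) = L(HGr')$ by two inclusions, both argued at the level of derivations. The key observation is that the single direct derivation step using $\pi$, namely $\circledcirc(A) \Rightarrow_\pi G(e_0:B)$, followed by an application of $B \to H_i$ at the edge $e_0$, produces exactly $G[H_i/e_0]$; conversely, each new production $A \to G[H_i/e_0]$ of $HGr'$ simulates this two-step segment of a $HGr$-derivation. So the whole argument rests on showing that these local rewriting operations commute appropriately with replacement into a larger context.

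For the inclusion $L(HGr) \subseteq L(HGr')$, I would take any terminal derivation $\circledcirc(S) \overset{\ast}{\Rightarrow} W$ in $HGr$ and locate every application of $\pi$. Since $\pi$ introduces a nonterminal edge labeled $B$ (the image of $e_0$), and the derivation ends in a terminal graph, that $B$-edge must eventually be rewritten by some production $B \to H_i$. The essential technical point is that edge replacement is \emph{confluent in the sense needed here}: the order in which edges are rewritten does not affect the final result, so I may assume without loss of generality that the rewriting of the freshly-created $B$-edge happens \emph{immediately} after the application of $\pi$. Formally this uses the fact that $G[H/e_0][H'/e_1] = G[H'/e_1][H/e_0]$ whenever $e_0 \neq e_1$ are edges with their replacements, together with associativity of replacement, i.e.~$(G[H_1/e_0])[H_2/e']$ for $e'$ an edge of $H_1$ equals the replacement of $e_0$ by $H_1[H_2/e']$. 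Once the two steps are adjacent, they collapse into a single application of the combined production $A \to G[H_i/e_0]$ of $HGr'$, and all other steps are unchanged (they are productions common to both grammars).

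For the reverse inclusion $L(HGr') \subseteq L(HGr)$, the simulation is more direct: any application of a new production $A \to G[H_i/e_0]$ in $HGr'$ is replaced by the two-step segment $\Rightarrow_\pi$ followed by the $B \to H_i$ step in $HGr$, again invoking associativity of replacement to confirm the results match inside an arbitrary context. I expect the main obstacle to be the bookkeeping around node and edge identities: replacement is defined only up to isomorphism and involves disjoint copies, fusion of external with attachment nodes, and deletion of $e_0$, so the commutativity and associativity facts about iterated replacement must be stated and checked carefully (they are standard for HRGs and implicit in \cite{Drewes97}, but need explicit invocation). The combinatorial rearrangement of derivation steps into adjacent pairs is conceptually the crux, though it is a routine induction on the length of the derivation once the algebraic laws for replacement are in hand.
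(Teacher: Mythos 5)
Your argument is correct and amounts to the same approach as the paper: the paper disposes of this lemma in one line by citing the well-known context-freeness lemma, and the derivation-reordering argument you give (commutativity of replacement at distinct edges plus associativity of nested replacement, used to make the $\pi$-step and the $B$-step adjacent and then collapse them) is precisely the content of that lemma unfolded. No gap; you have simply made explicit what the paper delegates to a citation.
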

It directly follows from the well-known context-freeness lemma.
\subsection{Eliminating Useless, Edgeless and Chain Productions}
Our first objective is to eliminate useless productions, productions with no edges in the right-hand side and productions with one nonterminal edge only in the right-hand side. It appears that only this step requires isolated-node boundedness. In the below transformations we provide theoretical reasonings that they can be done irregarding their algorithmic realization; however, the latter can be done similarly to the string case (see \cite{Aho72}).
\begin{definition}
	A nonterminal symbol $A$ in a grammar $HGr$ is \emph{useless} if there is no derivation of the form $S\overset{\ast}{\Rightarrow} H(e:A)\overset{\ast}{\Rightarrow} G$ in this grammar where $G$ is terminal.
\end{definition}
\begin{transform}[eliminating useless symbols]\label{useless}
	\leavevmode
	\\
	\textbf{Input:} an HRG $HGr$.
	\\
	\textbf{Output:} an equivalent HRG $HGr^\prime$ without useless nonterminal symbols.
\end{transform}
\noindent
It suffices to remove all the useless symbols and all the productions containing them. This does not affect the language generated.
\begin{definition}
	A graph is called \emph{edgeless} if it does not contain edges. A production $A\to G$ is called \emph{edgeless} if $G$ is edgeless.
\end{definition}
\begin{transform}[eliminating edgeless productions]\label{edgeless}
	\leavevmode
	\\
	\textbf{Input:} an HRG $HGr$ that generates an ibHCFL without useless symbols.
	\\
	\textbf{Output:} an HRG $HGr^\prime$ without edgeless productions such that $L(HGr^\prime)=L(HGr)$.
\end{transform}
\noindent
\textbf{Method.}
\\
	Let $HGr=\langle N,\Sigma,P,S\rangle$. Let $Null=\{(A; H)\mid E_H=\emptyset,\:A\in N,\:A\overset{\ast}{\Rightarrow}H\}$. This set is finite, because otherwise there is a symbol $A_0$ for which arbitrarily large edgeless graphs $H$ exist such that $(A_0; H)\in Null$; this contradicts the fact that $L(HGr)\in\mathrm{ibHCFL}$ (it is important that $A_0$ is not useless). 
	
	Let $B\to G\in P$ and $E_G=\{e_1,\dots,e_n\}$. Let $P_1$ contain the rules of the form $B\to G[H_1/e_1]\dots[H_n/e_n]$ where $H_i=\circledcirc(lab(e_i))$ for at least one $i$ (then replacement of $e_i$ by $H_i$ changes nothing) and $(lab(e_i);H_i)$ belongs to $Null$ otherwise. It is argued that $HGr^\prime=\langle N,\Sigma,P_1,S\rangle$ does not have productions with edgeless right-hand sides (this follows from the construction) and that $L(HGr^\prime)=L(HGr)$.
\qed

\begin{definition}
	A production $A\to G$ is called a chain production if $E_G=\{e\}$ and $lab_G(e)$ is nonterminal.
\end{definition}
\begin{example}
	The first production in the grammar $HGr_2$ is chain.
\end{example}
\begin{transform}[eliminating chain productions]\label{1edged}
	\leavevmode
	\\
	\textbf{Input:} an HRG $HGr$ that generates an ibHCFL without useless symbols.
	\\
	\textbf{Output:} an HRG $HGr^\prime$ without chain productions such that $L(HGr^\prime)=L(HGr)$.
\end{transform}
	\noindent
	\textbf{Method.}
	\\
	Let $HGr=\langle N,\Sigma,P,S\rangle$. Consider the set $Chain=\{(A;H)\mid E_H=\{e_0\},lab_H(e_0)=B,\:A,B\in N,\:A\overset{\ast}{\Rightarrow}H\}$. Note that $Chain$ is finite (again, otherwise one can derive a graph with arbitrarily many isolated nodes having a fixed number of edges). Let $HGr^\prime=\langle N,\Sigma,P^{\prime\prime},S\rangle$ where $P^\prime=P\setminus \{A\to H\mid (A;H)\in Chain\}$ and $P^{\prime\prime}=P^\prime\cup\{A\to G\mid \exists H:\:(A;H)\in Chain,H\to G\in P^\prime\}$. Thus we removed all the productions having a graph with one edge in the right-hand side. It can be easily shown that $L(HGr^\prime)=L(HGr)$.
\qed
\begin{remark}
	If $HGr$ in Transformation \ref{1edged} does not have edgeless productions, then $HGr^\prime$ does not have them either. Transformations \ref{edgeless} and \ref{1edged} applied to a grammar without useless symbols transform them into a grammar without useless symbols too.
\end{remark}
\begin{example}
	The grammar $HGr_2$ from Example \ref{ex_nonWGNF} cannot be turned into an equivalent one without edgeless and chain productions.
\end{example}
The above procedures complete the first step of normalization. 

\subsection{Defining and Eliminating Recursive Productions}
Now we start proving Theorem \ref{THEOREM}. Let $HGr=\langle N,\Sigma,P,S\rangle$ be a grammar that generates $L$. Applying Transformations \ref{useless}, \ref{edgeless}, \ref{1edged}, we can assume that $HGr$ does not have useless symbols, edgeless productions and chain productions. Thus, the right-hand side of each production has at least two edges or one terminal edge.
	
Let us carefully examine what happens in the string case at Step \ref{recprodstr}. Consider the following
\begin{example}\label{rederstr}
	Let $A\to Ac|Ad|Be$ be all the $A$-productions in a string context-free grammar. After Step \ref{recprodstr} there are the following ones: $A\to BeA^\prime|Be,\; A^\prime\to c|d|cA^\prime|dA^\prime$. Below we show how a derivation in the old grammar is remodeled in the new one:
	
	\begin{tabular}{ll}
		\textbf{Old:}& $A\Rightarrow Ac\Rightarrow Adc\Rightarrow Addc\Rightarrow Beddc$.
		\\
		\textbf{New:}& $A\Rightarrow BeA^\prime\Rightarrow BedA^\prime\Rightarrow BeddA^\prime\Rightarrow Beddc$.\\
	\end{tabular}
	\\
	The underlying idea is to invert the derivation: in the new derivation one starts with $Be$ and applies new productions in reverse order w.r.t. to the old derivation.
\end{example}
	The same idea is used in the graph case. The difficulty is that we do not have the leftmost symbol in productions. However, it suffices to distinguish an arbitrary edge in the right-hand side of each production to play the role of the leftmost symbol. We do this by means of the function $\delta$. 
	\begin{definition}
	Let us fix an arbitrary function $\delta$ which acts on $P$ in such a way that $\delta(A\to H)$ belongs to $E_H$. We denote $\mu(\pi)=lab(\delta(\pi))$. 
	\end{definition}
	Now $\mu(\pi)$ plays the role of the ``leftmost'' symbol of a production $\pi$. Then we define recursive productions as expected.
	\begin{definition}
		A production $\pi$ is recursive if $lhs(\pi)=\mu(\pi)$. A production $\pi$ is an $A$-production if $lhs(\pi)=A$.
	\end{definition}
	\begin{definition}
		A derivation $A\Rightarrow H_1\Rightarrow\dots\Rightarrow H_k$ is called a $\delta$-derivation if each of its productions is applied to the edge that is $\delta$ of the previous production. Formally, if $A\underset{\pi_1}{\Rightarrow} H_1\underset{\pi_2}{\Rightarrow} \dots \underset{\pi_k}{\Rightarrow} H_k$, then $\pi_i$ has to be appied to $\delta(\pi_{i-1})$ (in the subgraph that appears after the $(i-1)$-th step). We say that the final label of such a derivation is $\mu(\pi_k)$. Note that $\mu(\pi_i)$ has to be nonterminal for $i<k$ (but $\mu(\pi_k)$ can be terminal).
	\end{definition}
	Below we study how to eliminate all the recursive $A$-productions for some fixed $A\in N$. In the string case the idea is quite simple: one moves $A$ from the beginning of the string to its end in order to invert a derivation process. Here we also exploit the idea of inverting the derivation but, of course, we have to take into account more complex and general graph structures.
	
	Let $A\underset{\pi_1}{\Rightarrow} H_1\underset{\pi_2}{\Rightarrow} \dots \underset{\pi_k}{\Rightarrow} H_k$ be a $\delta$-derivation such that $lhs(\pi_i)=A$ for $i=1,\dots,k$ and $\mu(\pi_k)\ne A$ (compare this with the old derivation in Example \ref{rederstr}). We provide the following intuition behind the below construction. Imagine that one has a hole in his sweater (this metaphor is related to that in \cite{Engelfriet92}), and he sews it up starting from edges (i.e.~he applies $\pi_1,\dots,\pi_{k-1}$) and finishing by sewing on a patch in the center of the hole (i.e.~he applies $\pi_k$). In the new derivation one starts with the patch (that is, the right-hand side of $\pi_k$), then he sews up right-hand sides of $\pi_{k-1},\dots,\pi_2$ in this order (thus the patch grows and becomes bigger) and finishing by connecting it with edges of the hole w.r.t. the production $\pi_1$ (see also Fig. \ref{fig_rederivation} which provides a sketch of this process).
	
	The major problem of this idea is the following: if we want to invert a derivation such as on Figure \ref{fig_source_der} and to obtain a new one, such as one drawn on Figure \ref{fig_new_der}, we have to carefully control external nodes. Namely, in the old derivation external nodes of $H$ are predefined by $R_1$ while in the new derivation $R_1$ appears at the very last step. In order to place external nodes correctly, we introduce complex nonterminal symbols that describe how external nodes of a graph on the current derivation step correspond to external nodes of a graph on the next and on the last step. This description is done by means of partial functions.
	
	Now we proceed with formal realization of this idea. Let $t=type(A)$. Let $\rho_1=A\to R_1,\dots,\rho_K=A\to R_K$ be all the recursive $A$-productions in $P$ and let $\gamma_1=A\to G_1,\dots, \gamma_L=A\to G_L$ be the remaining $A$-productions in $P$. Our goal is to construct a grammar equivalent to $HGr$ without recursive $A$-productions.
	Firstly, we simply remove them: $P_1:=P\setminus \{\rho_1,\dots,\rho_K\}$. In order to compensate for the lack of these rules we add new ones accordingly to the procedure below. 
	\begin{definition}
		$f:X\to Y$ is a partial function if $f$ is a function on some subset $X^\prime$ of $X$. We denote the domain $X^\prime$ of $f$ by $Dom(f)$, and the range of $f$ by $Ran(f)$.
	\end{definition}
	\begin{definition}
		$f:X\to Y$ is a partial bijection if $f$ is a partial function such that $f|_{Dom(f)}$ is a bijection.
	\end{definition}
	\begin{definition}
		If $f:X\to Y$, $g:Y\to Z$ are partial functions, then $g\circ f:X\to Z$ is a partial function defined on $Dom(f)\cap f^{-1}(Dom(g))$ that acts on this set as a usual composition.
	\end{definition}
	
	\begin{figure}
		\begin{subfigure}{0.5\linewidth}
		\centering	
		\begin{tikzpicture}
		\renewcommand{\scale}{1.7}
		\node (v0) at (0.5*\scale,0.5*\scale) {$A\to$};
		\node (v1) at (1.2*\scale,0.866*\scale) {};
		\begin{scope}[fill opacity=1]
		\filldraw[pattern=north west lines,pattern color=blue] ($(v1)+(-0.2*\scale,0.2*\scale)$) 
		to[out=45,in=180] ($(v1) + (0*\scale,0.25*\scale)$) 
		to[out=0,in=180] ($(v1) + (0.5*\scale,0.15*\scale)$)
		to[out=0,in=180] ($(v1) + (1*\scale,0.25*\scale)$)
		to[out=0,in=135] ($(v1)+(1.2*\scale,0.2*\scale)$)
		to[out=-45,in=90] ($(v1)+(0.8*\scale,-0.866*\scale)$)
		to[out=270,in=0] ($(v1)+(0.5*\scale,-1.116*\scale)$)
		to[out=180,in=-60] ($(v1)+(0.1*\scale,-0.634*\scale)$)
		to[out=120,in=270] ($(v1)+(-0.3*\scale,0*\scale)$)
		to[out=90,in=-135] ($(v1)+(-0.2*\scale,0.2*\scale)$)
		;
		\end{scope}		
		\filldraw[color=white,text=black] ($(v1)+(0.5*\scale,-0.334*\scale)$) circle [radius=0.2*\scale] node {\Large $G$};
		\node (e1) at ($(v1)+(0.9*\scale,0.4*\scale)$) {\scriptsize $(1)$};
		\node (e2) at ($(v1)+(-0.2*\scale,0.35*\scale)$) {\scriptsize $(t)$};
		\draw[-,black,opacity=0]  (e1) to [bend left=10]
		node[pos=0.55,sloped,above,black,opacity=1] {\dots}
		($(v1)+(1.3*\scale,0.1*\scale)$);
		\draw[-,black,opacity=0]  (e2) to [bend right=5]
		node[pos=0.55,sloped,above,black,opacity=1] {\dots}
		($(v1)+(-0.35*\scale,0.1*\scale)$);
		\end{tikzpicture}
	\caption{The production $\gamma$.}
	\end{subfigure}
	\begin{subfigure}{0.5\linewidth}
			\centering	
				\begin{tikzpicture}
				\renewcommand{\scale}{1.7}
				\node (v0) at (0.5*\scale,0.5*\scale) {$A\to$};
				\node (v1) at (2.1*\scale,1*\scale) {};
				\begin{scope}[fill opacity=1]
				\filldraw[pattern=north east lines,pattern color=cyan] ($(v1)+(0*\scale,0.3*\scale)$) 
				to[out=0,in=150] ($(v1) + (0.9*\scale,-0.3*\scale)$) 
				to[out=-30,in=90] ($(v1) + (1.2*\scale,-0.7*\scale)$)
				to[out=-90,in=0] ($(v1) + (0.75*\scale,-1.2*\scale)$)
				to[out=180,in=0] ($(v1) + (0*\scale,-0.95*\scale)$)
				to[out=180,in=0] ($(v1) + (-0.75*\scale,-1.2*\scale)$)
				to[out=180,in=-90] ($(v1) + (-1.1*\scale,-0.7*\scale)$)
				to[out=90,in=180] ($(v1)+(0*\scale,0.3*\scale)$) 
				;
				\end{scope}
				\filldraw[color=white] ($(v1)+(-0.3*\scale,-0.6*\scale)$) rectangle ($(v1)+(0.2*\scale,-0.1*\scale)$);
				\filldraw[pattern=horizontal lines,pattern color=red] ($(v1)+(-0.3*\scale,-0.6*\scale)$) rectangle ($(v1)+(0.2*\scale,-0.1*\scale)$);
				\node (R) at ($(v1)+(0*\scale,-1.2*\scale)$) {\large $R$};
				\filldraw[color=white,text=black] ($(v1)+(-0.05*\scale,-0.35*\scale)$) circle [radius=0.12*\scale] node {$A$};
				
				\node (e1) at ($(v1)+(0.77*\scale,0*\scale)$) {\scriptsize $(1)$};
				\draw[-,black,opacity=0]  (e1) to [bend left=5] node[pos=0.55,sloped,above,black,opacity=1] {\dots} ($(v1)+(1.2*\scale,-0.5*\scale)$);
				\node (e2) at ($(v1)+(-1*\scale,0*\scale)$) {\scriptsize $(t)$};
				\draw[-,black,opacity=0]  (e2) to [bend right=5]
				node[pos=0.55,sloped,above,black,opacity=1] {\dots}
				($(v1)+(-1.1*\scale,-0.5*\scale)$);
				
				\draw[-,black] ($(v1)+(0.2*\scale,-0.25*\scale)$) to node[above] {\scriptsize 1} ($(v1)+(0.5*\scale,-0.2*\scale)$);
				\filldraw[color=black] ($(v1)+(0.5*\scale,-0.2*\scale)$) circle [radius=0.02*\scale] node {};
				
				\draw[-,black] ($(v1)+(0.2*\scale,-0.5*\scale)$) to node[above] {\scriptsize 2} ($(v1)+(0.5*\scale,-0.55*\scale)$);
				\filldraw[color=black] ($(v1)+(0.5*\scale,-0.55*\scale)$) circle [radius=0.02*\scale] node {};

				\draw[-,black] ($(v1)+(-0.3*\scale,-0.25*\scale)$) to node[below] {\scriptsize $t$} ($(v1)+(-0.6*\scale,-0.2*\scale)$);
				\filldraw[color=black] ($(v1)+(-0.6*\scale,-0.2*\scale)$) circle [radius=0.02*\scale] node {};
				
				\draw[-,black,opacity=0]  ($(v1)+(0.5*\scale,-0.55*\scale)$) to [bend left=25]
				node[pos=0.55,sloped,above,black,opacity=1] {\dots}
				($(v1)+(-0.3*\scale,-0.9*\scale)$);
				\end{tikzpicture}
			\caption{The production $\rho$.}
		\end{subfigure}
	\caption{The productions $\gamma$ and $\rho$.}\label{GR}
	\end{figure}
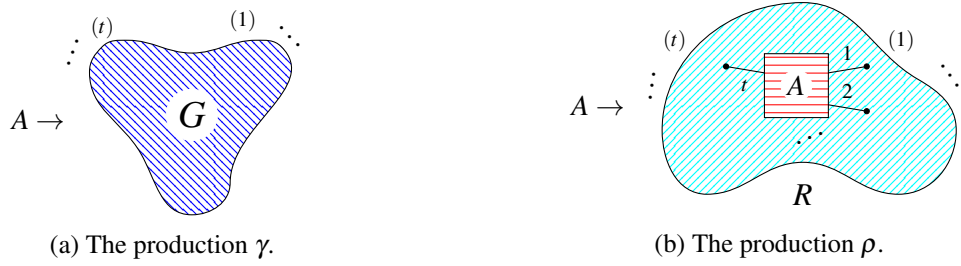
	Let $\gamma=\gamma_i$ for some $i$ and let $\rho = \rho_j$ for some $j$. We define $e_0:=\delta(\rho)$, $\widetilde{e}:=\delta(\gamma)$, $G:=rhs(\gamma)$ and $R:=rhs(\rho)$. Illustrations of these productions are presented in Fig. \ref{GR}. Firstly, we extend $N$ by \emph{new} nonterminal symbols of the form $(A,f,g)$ where $f,g:[1,t]\to[1,t]$ are two partial bijections; the resulting set is denoted by $N^\prime$. Then we add to $P_1$ rules of the following three forms:
	\\
	\textbf{Type I. }
	Recall the metaphor about a sweater and a patch. If one imagines an $A$-labeled edge at the beginning of a derivation as a hole in a sweater, then productions of type I are designed to add a patch $G$ at the very beginning of sewing up (= of a derivation). Partial functions are used to describe what external nodes of $G$ (edges of a patch) are equal to attachment nodes of the $A$-labelled edge (edges of a hole).
	\\
	Let $f,g:[1,t]\to[1,t]$ be two partial bijections. Informally, $f$ specifies what nodes of $ext_G$ are external on the second step of the inverted derivation (see again Fig. \ref{fig_new_der}); $g$ specifies what external nodes of the second step graph are external at the last step.
	Let $Dom(g\circ f)=\{p_1,\dots,p_k\}_{IO}$.
	We set
	\begin{itemize}
	\item $V^\prime:=V_G\cup\{u_1,\dots,u_{t-k}\}$ for $u_1,\dots, u_{t-k}$ being new nodes;
	\item $E^\prime :=E_G\cup\{e^\prime\}$ where $e^\prime$ is new;
	\item For $e\in E_G$ we set $att^\prime(e):=att_G(e)$ and $lab^\prime(e)=lab_G(e)$;
	\item For $e^\prime$ we set $att^\prime(e^\prime):=ext_Gu_1\dots u_{t-k}$;
	\item $lab^\prime(e^\prime):=(A,f,g)$;
	\item Let $[1,t]\setminus Ran (g\circ f)=\{j_1,\dots,j_{t-k}\}_{IO}$ (note that $g\circ f$ is bijective so $Dom(g\circ f)$ and $Ran(g\circ f)$ are of the same size). Then we set $ext^\prime(i):=u_r$, if $i=j_r,\;r\in\{1,\dots,t-k\}$ or $ext^\prime(i):=ext_G(p_r)$ for $i=i_r=g(f(p_r)),\;r\in\{1,\dots,k\}$.
	\end{itemize}
	We are ready to announce that $G^\prime=\langle V^\prime, E^\prime, att^\prime,lab^\prime,ext^\prime\rangle$. Finally, we introduce the following rule: $${\nu_{1,\gamma,f,g}:=A\to G^\prime}.$$ We define $\delta(\nu_{1,\gamma,f,g})=\delta(\gamma)$ (note that $\gamma$ is nonrecursive so is $\nu_{1,\gamma,f,g}$). 	
	\begin{remark}
		Type of $(A,f,g)$ equals $type((A,f,g))=t+t-k=2\cdot type(A)-|Dom(g\circ f)|$.
	\end{remark}
	Figure \ref{fig_I} illustrates this type of productions.
	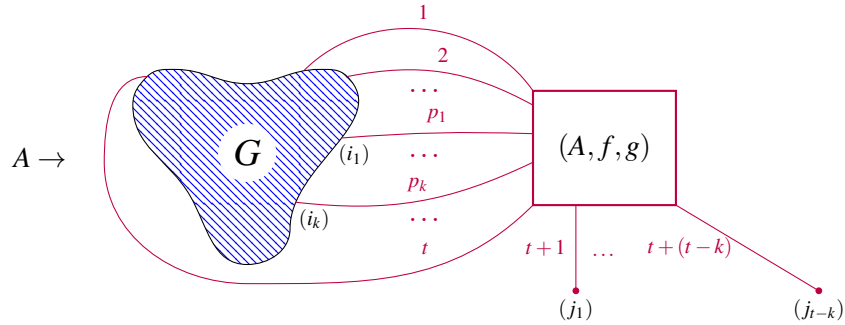
\begin{figure}
			\centering	
			\begin{tikzpicture}
			\renewcommand{\scale}{1.9}
			\node (v0) at (0.5*\scale,0.5*\scale) {$A\to$};
			\node (v1) at (1.45*\scale,0.866*\scale) {};
			\begin{scope}[fill opacity=1]
			\filldraw[pattern=north west lines,pattern color=blue] ($(v1)+(-0.2*\scale,0.2*\scale)$) 
			to[out=45,in=180] ($(v1) + (0*\scale,0.25*\scale)$) 
			to[out=0,in=180] ($(v1) + (0.5*\scale,0.15*\scale)$)
			to[out=0,in=180] ($(v1) + (1*\scale,0.25*\scale)$)
			to[out=0,in=135] ($(v1)+(1.2*\scale,0.2*\scale)$)
			to[out=-45,in=90] ($(v1)+(0.8*\scale,-0.866*\scale)$)
			to[out=270,in=0] ($(v1)+(0.5*\scale,-1.116*\scale)$)
			to[out=180,in=-60] ($(v1)+(0.1*\scale,-0.634*\scale)$)
			to[out=120,in=270] ($(v1)+(-0.3*\scale,0*\scale)$)
			to[out=90,in=-135] ($(v1)+(-0.2*\scale,0.2*\scale)$)
			;
			\end{scope}
			\filldraw[color=white,text=black] ($(v1)+(0.5*\scale,-0.334*\scale)$) circle [radius=0.2*\scale] node {\Large $G$};

			\node (Afg) at ($(v1)+(3*\scale,-0.3*\scale)$) { $(A,f,g)$};
			\draw[color=purple, text = black,thick] ($(Afg)+(-0.5*\scale,-0.4*\scale)$) rectangle ($(Afg)+(0.5*\scale,0.4*\scale)$);
			
			\draw[-,purple,opacity=1]  ($(Afg)+(-0.5*\scale,0.4*\scale)$) to [bend right=50] 
			node[above,purple,opacity=1] {\scriptsize 1}
			($(v1)+(0.9*\scale,0.245*\scale)$);
			
			\draw[-,purple,opacity=1]  ($(Afg)+(-0.5*\scale,0.3*\scale)$) to [bend right=20] 
			node[above,purple,opacity=1] {\scriptsize 2}
			($(v1)+(1.2*\scale,0.2*\scale)$);
			
			\node[purple] at ($(v1)+(1.75*\scale,0.1*\scale)$) {$\dots$};

			\draw[-,purple,opacity=1]  ($(Afg)+(-0.5*\scale,0.1*\scale)$) to [bend right=3] node[above,purple,opacity=1] {\scriptsize $p_1$} ($(v1)+(1.16*\scale,-0.23*\scale)$);
			
			\node[black] at ($(v1)+(1.25*\scale,-0.35*\scale)$) {\scriptsize $(i_1)$};
			
			\node[purple] at ($(v1)+(1.75*\scale,-0.35*\scale)$) {$\dots$};
			
			\draw[-,purple,opacity=1]  ($(Afg)+(-0.5*\scale,-0.1*\scale)$) to [bend left=16] node[above,purple,opacity=1] {\scriptsize $p_k$} ($(v1)+(0.84*\scale,-0.68*\scale)$);

			\node[black] at ($(v1)+(0.97*\scale,-0.8*\scale)$) {\scriptsize $(i_k)$};

			\node[purple] at ($(v1)+(1.75*\scale,-0.8*\scale)$) {$\dots$};
			
			\begin{scope}
			\draw[color=purple] ($(Afg)+(-0.5*\scale,-0.4*\scale)$)
			to[out=-135,in=0] ($(v1) + (0.5*\scale,-1.25*\scale)$)
			to[out=180,in=-90] ($(v1) + (-0.5*\scale,-0.4*\scale)$) 
			to[out=90,in=180] ($(v1) + (-0.2*\scale,0.2*\scale)$) 
			;
			\end{scope}

			\node[purple] at ($(v1)+(1.75*\scale,-1*\scale)$) {\scriptsize $t$};

			\draw[-,purple,opacity=1]  ($(Afg)+(-0.2*\scale,-0.4*\scale)$) to node[left,purple,opacity=1] {\scriptsize $t+1$} ($(Afg)+(-0.2*\scale,-1*\scale)$);
			
			\filldraw[color=purple] ($(Afg)+(-0.2*\scale,-1*\scale)$) circle [radius=0.02*\scale] node {};
			
			\node[purple] at ($(Afg)+(0*\scale,-0.75*\scale)$) {\scriptsize $\dots$};
			
			\draw[-,purple,opacity=1]  ($(Afg)+(0.5*\scale,-0.4*\scale)$) to node[left,purple,opacity=1] {\scriptsize $t+(t-k)\;$} ($(Afg)+(1.5*\scale,-1*\scale)$);
			
			\filldraw[color=purple] ($(Afg)+(1.5*\scale,-1*\scale)$) circle [radius=0.02*\scale] node {};
			
			\node[black] at ($(Afg)+(-0.2*\scale,-1.13*\scale)$) {\scriptsize $(j_1)$};
			
			\node[black] at ($(Afg)+(1.5*\scale,-1.13*\scale)$) {\scriptsize $(j_{t-k})$};

			\end{tikzpicture}
		\caption{The production $\nu_{1,\gamma,f,g}$.}\label{fig_I}
	\end{figure}
	
	\textbf{Type II. }
	Production of type II are used at the last step of a derivation such a on Fig. \ref{fig_new_der}; returning to the metaphor they allow one to finish sewing up and to finally connect a patch (that has grown by applying productions of types I and III) with the edges of a hole. 
	\\
	We recall that $\rho=A\to R$ such that $\delta(\rho)=e_0\in E_R$ and $lab(e_0)=A$. Let $f_0:[1,t]\to[1,t]$ be a function which is defined by the following relation: $f_0(i)=j\Leftrightarrow att_R(e_0)(i)=ext_R(j)$. Obviously, $f_0$ is a partial bijection. We set
	\begin{itemize}
		\item $V^\prime:=V_R$;
		\item $E^\prime :=E_G\setminus\{e_0\}$;
		\item $att^\prime=att_R|_{E^\prime}$, $lab^\prime=lab_R|_{E^\prime}$
		\item Let $k$ be equal to $|Dom(f_0)|$ and let $[1,t]\setminus Ran(f_0)=\{j_1,\dots,j_{t-k}\}_{IO}$. Then $ext^\prime=att_R(e_0)ext_R(j_1)\allowbreak\dots ext_R(j_{t-k})$.
	\end{itemize}
	We set $R^\prime=\langle V^\prime, E^\prime, att^\prime,lab^\prime,ext^\prime\rangle$. Finally, we introduce ${\nu_{2,\rho}:=(A,f_0,id)\to R^\prime}$ and set ${\delta(\nu_{2,\rho}):=e}$ for some chosen $e\in E_{R^\prime}$ (note that there is one since $|E_G|\ge 2$). Here $id:[1,t]\to [1,t]$ is the identity function. See Figure \ref{fig_II}.
	\begin{remark}
		$type(R^\prime)=2t-k=2\cdot type(A)-|Dom(f_0)|=2\cdot type(A)-|Dom(id\circ f_0)|=type((A,f_0,id))$. Thus the production is defined correctly.
	\end{remark}
	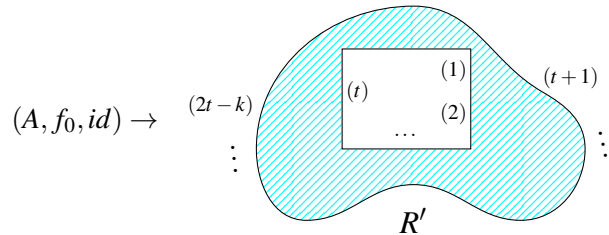
\begin{figure}
		\centering
			\begin{tikzpicture}
				\renewcommand{\scale}{1.9}
				\node (v0) at (0.5*\scale,0.5*\scale) {$(A,f_0,id)\to$};
				\node (v1) at (2.8*\scale,1*\scale) {};
				\begin{scope}[fill opacity=1]
				\filldraw[pattern=north east lines,pattern color=cyan] ($(v1)+(0*\scale,0.3*\scale)$) 
				to[out=0,in=150] ($(v1) + (0.9*\scale,-0.3*\scale)$) 
				to[out=-30,in=90] ($(v1) + (1.2*\scale,-0.7*\scale)$)
				to[out=-90,in=0] ($(v1) + (0.75*\scale,-1.2*\scale)$)
				to[out=180,in=0] ($(v1) + (0*\scale,-0.95*\scale)$)
				to[out=180,in=0] ($(v1) + (-0.75*\scale,-1.2*\scale)$)
				to[out=180,in=-90] ($(v1) + (-1.1*\scale,-0.7*\scale)$)
				to[out=90,in=180] ($(v1)+(0*\scale,0.3*\scale)$) 
				;
				\end{scope}
				\filldraw[color=white] ($(v1)+(-0.5*\scale,-0.7*\scale)$) rectangle ($(v1)+(0.4*\scale, 0*\scale)$);
				\draw[color=black] ($(v1)+(-0.5*\scale,-0.7*\scale)$) rectangle ($(v1)+(0.4*\scale,0*\scale)$);
				\node (R) at ($(v1)+(0*\scale,-1.2*\scale)$) {\large $R^\prime$};
				
				\node (e1) at ($(v1)+(1.1*\scale,-0.2*\scale)$) {\scriptsize $(t+1)$};
				\draw[-,black,opacity=0]  (e1) to [bend left=5] node[pos=0.55,sloped,above,black,opacity=1] {\dots} ($(v1)+(1.3*\scale,-1*\scale)$);
				\node (e2) at ($(v1)+(-1.35*\scale,-0.4*\scale)$) {\scriptsize $(2t-k)$};
				\draw[-,black,opacity=0]  (e2) to [bend right=5]
				node[pos=0.55,sloped,above,black,opacity=1] {\dots}
				($(v1)+(-1.3*\scale,-1*\scale)$);

				\node at ($(v1)+(0.28*\scale,-0.15*\scale)$) {\scriptsize $(1)$};
				
				\node at ($(v1)+(0.28*\scale,-0.45*\scale)$) {\scriptsize $(2)$};

				\node at ($(v1)+(-0.05*\scale,-0.6*\scale)$) {\scriptsize $\dots$};
				
				\node at ($(v1)+(-0.39*\scale,-0.3*\scale)$) {\scriptsize $(t)$};

			\end{tikzpicture}
		\caption{The production $\nu_{2,\rho}$.}\label{fig_II}
	\end{figure}
	\textbf{Type III. }
	Productions of type III serve to make a ``sewing up step'' from inside to outside. $f,g$ that specify relations between external nodes on the current step, on the next step and on the final step are changed by related functions $f^\prime,g^\prime$.
	\\
	Let $f,g,f^\prime,g^\prime:[1,t]\to[1,t]$ be partial bijections. 
	The whole procedure described below can be done if $g^\prime\circ f^\prime = g$.
	\newpage
	We set
	\begin{itemize}
		\item $V^\prime:=V_R\cup\{u_1,\dots,u_{t-k^\prime}\}$ such that $u_1,\dots, u_{t-k^\prime}$ are new nodes ($k^\prime$ is defined below);
		\item $E^\prime :=E_R\setminus\{e_0\}\cup\{e^\prime\}$ where $e_0=\delta(\rho)$ and $e^\prime$ is new;
		\item For $e\in E_R\setminus\{e_0\}$ we set $att^\prime(e):=att_R(e)$ and $lab^\prime(e)=lab_R(e)$;
		\item For $e^\prime$ we set $att^\prime(e^\prime):=ext_Ru_1\dots u_{t-k^\prime}$;
		\item $lab^\prime(e^\prime):=(A,f^\prime,g^\prime)$;
		\item Let 
		\begin{itemize}
			\item $M_1:=Dom(g)\setminus Ran(f)=\{s_1,\dots,s_p\}_{IO}$ (nodes of the current step that will be external at the last step but that were not taken into account at the previous step);
			\item $M_2:=[1,t]\setminus Ran (g\circ f)=\{j_1,\dots,j_{t-k}\}_{IO}$;
			\item $g(s_i)=j_{l_i}$;
			\item Let $M_2\setminus g(M_1)=\{j_{x_1},\dots,j_{x_{t-k-p}}\}$ where $\{x_1,\dots,x_{t-k-p}\}$ is index-ordered; here we note that $g(M_1)=Ran(g)\setminus Ran(g\circ f)$, thus $g(M_1)\subseteq M_2$.
		\end{itemize}
		Then we set $k^\prime = k + p$ and
		\begin{itemize}
			\item $ext^\prime(i):=att_R(e_0)(i)$ for $i=1,\dots,t$;
			\item $ext^\prime(t+l_i)=ext_R(s_i)$ for $i=1,\dots,p$;
			\item $ext^\prime(t+x_i):=u_{i}$ for $i=1,\dots,t-k^\prime$.
		\end{itemize}
	\end{itemize}
	We set $R^\prime=\langle V^\prime, E^\prime, att^\prime,lab^\prime,ext^\prime\rangle$. Then ${\nu_{3,\rho,f,g,f^\prime,g^\prime}:=(A,f,g)\to R^\prime}$. We put ${\delta(\nu_{3,\rho,f,g,f^\prime,g^\prime})=e}$ for some $e\in E_{R}\setminus\{e_0\}$. The production is illustrated on Figure \ref{fig_III}.
	\begin{remark}
		Again, we prove well-definedness of this production:
		\begin{itemize}
			\item Firstly, we check that $type(R^\prime)=type((A,f,g))$: $|ext^\prime|=t+p+t-k^\prime=2t+p-k-p=2\cdot type(A)-|Ran(g\circ f)|=type((A,f,g))$;
			\item Secondly, we have to check that $|att(e^\prime)|=type((a,f^\prime,g^\prime))$: $|att(e^\prime)|=t+t-k^\prime=t+|M_2\setminus g(M_1)|$. Note that $M_2\setminus g(M_1) = \left([1,t]\setminus Ran(g\circ f)\right)\setminus \left(Ran(g)\setminus Ran(g\circ f)\right)=[1,t]\setminus (Ran(g)\cup Ran(g\circ f))=[1,t]\setminus Ran(g)$. This yields $|att(e^\prime)|=2t-|Ran(g)|=2t-|Ran(g^\prime\circ f^\prime)|=type((A,f^\prime,g^\prime))$.
		\end{itemize}
	\end{remark}
	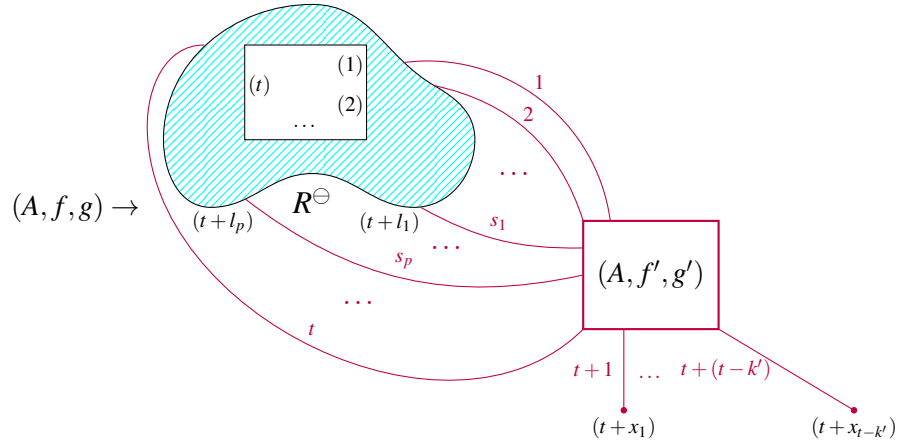
\begin{figure}
		\centering
		\begin{tikzpicture}
		\renewcommand{\scale}{1.8}
		\node (v0) at (0.5*\scale,0.5*\scale) {$(A,f,g)\to$};
		\node (v1) at (2.25*\scale,1.7*\scale) {};
		\begin{scope}[fill opacity=1]
		\filldraw[pattern=north east lines,pattern color=cyan] ($(v1)+(0*\scale,0.3*\scale)$) 
		to[out=0,in=150] ($(v1) + (0.9*\scale,-0.3*\scale)$) 
		to[out=-30,in=90] ($(v1) + (1.2*\scale,-0.7*\scale)$)
		to[out=-90,in=0] ($(v1) + (0.75*\scale,-1.2*\scale)$)
		to[out=180,in=0] ($(v1) + (0*\scale,-0.95*\scale)$)
		to[out=180,in=0] ($(v1) + (-0.75*\scale,-1.2*\scale)$)
		to[out=180,in=-90] ($(v1) + (-1.1*\scale,-0.7*\scale)$)
		to[out=90,in=180] ($(v1)+(0*\scale,0.3*\scale)$) 
		;
		\end{scope}
		\filldraw[color=white] ($(v1)+(-0.5*\scale,-0.7*\scale)$) rectangle ($(v1)+(0.4*\scale, 0*\scale)$);
		\draw[color=black] ($(v1)+(-0.5*\scale,-0.7*\scale)$) rectangle ($(v1)+(0.4*\scale,0*\scale)$);
		
		\node at ($(v1)+(0.28*\scale,-0.15*\scale)$) {\scriptsize $(1)$};
		
		\node at ($(v1)+(0.28*\scale,-0.45*\scale)$) {\scriptsize $(2)$};
		
		\node at ($(v1)+(-0.05*\scale,-0.6*\scale)$) {\scriptsize $\dots$};
		
		\node at ($(v1)+(-0.39*\scale,-0.3*\scale)$) {\scriptsize $(t)$};

		\node (Afg) at ($(v1)+(2.5*\scale,-1.7*\scale)$) { $(A,f^\prime,g^\prime)$};
		\draw[color=purple, text = black,thick] ($(Afg)+(-0.5*\scale,-0.4*\scale)$) rectangle ($(Afg)+(0.5*\scale,0.4*\scale)$);
		
		\draw[-,purple,opacity=1] ($(Afg)+(-0.3*\scale,0.4*\scale)$) to [bend right=45] 
		node[above,purple,opacity=1] {\scriptsize 1}
		($(v1)+(0.67*\scale,-0.13*\scale)$);

		\draw[-,purple,opacity=1] ($(Afg)+(-0.5*\scale,0.4*\scale)$) to [bend right=30] node[above,purple,opacity=1] {\scriptsize 2} ($(v1)+(0.9*\scale,-0.3*\scale)$);

		\node[purple] at ($(v1)+(1.5*\scale,-0.95*\scale)$) {$\dots$};
		
		\draw[-,purple,opacity=1] ($(Afg)+(-0.5*\scale,0.2*\scale)$) to [bend left=15] node[above,purple,opacity=1] {\scriptsize $s_1$} ($(v1)+(0.8*\scale,-1.2*\scale)$);
		
		\node at ($(v1)+(0.57*\scale,-1.31*\scale)$) {\scriptsize $(t+l_1)$};
		
		\node[purple] at ($(v1)+(1*\scale,-1.5*\scale)$) {$\dots$};
		
		\draw[-,purple,opacity=1] ($(Afg)+(-0.5*\scale,0*\scale)$) to [bend left=25] node[above,purple,opacity=1] {\scriptsize $s_p$} ($(v1)+(-0.5*\scale,-1.135*\scale)$);
		
		\node at ($(v1)+(-0.66*\scale,-1.31*\scale)$) {\scriptsize $(t+l_p)$};
		
		\begin{scope}
		\draw[color=purple] ($(Afg)+(-0.5*\scale,-0.4*\scale)$)
		to[out=-135,in=-90] ($(v1) + (-1.22*\scale,-0.6*\scale)$)
		to[out=90,in=180] ($(v1) + (-0.8*\scale,0*\scale)$)
		;
		\end{scope}
		
		\node[purple] at ($(v1)+(0.35*\scale,-1.9*\scale)$) {$\dots$};
		
		\node[purple] at ($(v1)+(0*\scale,-2.1*\scale)$) {\scriptsize $t$};

		\draw[-,purple,opacity=1]  ($(Afg)+(-0.2*\scale,-0.4*\scale)$) to node[left,purple,opacity=1] {\scriptsize $t+1$} ($(Afg)+(-0.2*\scale,-1*\scale)$);
		
		\filldraw[color=purple] ($(Afg)+(-0.2*\scale,-1*\scale)$) circle [radius=0.02*\scale] node {};
		
		\node[purple] at ($(Afg)+(0*\scale,-0.75*\scale)$) {\scriptsize $\dots$};
		
		\draw[-,purple,opacity=1]  ($(Afg)+(0.5*\scale,-0.4*\scale)$) to node[left,purple,opacity=1] {\scriptsize $t+(t-k^\prime)\;$} ($(Afg)+(1.5*\scale,-1*\scale)$);
		
		\filldraw[color=purple] ($(Afg)+(1.5*\scale,-1*\scale)$) circle [radius=0.02*\scale] node {};
		
		\node[black] at ($(Afg)+(-0.2*\scale,-1.13*\scale)$) {\scriptsize $(t+x_1)$};
		
		\node[black] at ($(Afg)+(1.5*\scale,-1.13*\scale)$) {\scriptsize $(t+x_{t-k^\prime})$};
		
		\node (R) at ($(v1)+(0*\scale,-1.15*\scale)$) {\large $R^\ominus$};
		\end{tikzpicture}
		\caption{The production $\nu_{3,\rho,f,g,f^\prime,g^\prime}$. Here $R^\ominus$ denotes a graph $R$ without $\delta(\rho)$.}\label{fig_III}
	\end{figure}
	
	We say that $P_2$ is obtained from $P_1$ by adding all possible productions of all the above types. Obviously, their number is finite (each production is defined by at most four partial functions from $[1,t]$ to $[1,t]$).
	\begin{lemma}\label{lem_rec}
		The grammar $HGr_1=\langle N^\prime, \Sigma, P_2,S\rangle$ is equivalent to $HGr=\langle N,\Sigma,P,S\rangle$.
	\end{lemma}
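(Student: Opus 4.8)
The plan is to prove the two language inclusions $L(HGr)\subseteq L(HGr_1)$ and $L(HGr_1)\subseteq L(HGr)$ separately, in both cases reducing to a local statement about a single maximal chain of recursive $A$-productions. Since $P_1=P\setminus\{\rho_1,\dots,\rho_K\}$ and $P_2$ adds only the Type I, II and III productions, the two grammars agree on every production except those that deal with $A$ recursively; by the context-freeness lemma (Lemma \ref{lemma_repl}) together with the possibility of reordering independent replacement steps, I would first argue that any derivation in either grammar can be normalized so that the applications of recursive $A$-productions (respectively, of the new productions) occurring along one $\delta$-chain appear consecutively. This isolates the only place where the two grammars differ and lets me treat everything around such a chain as a shared context.

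For the inclusion $L(HGr)\subseteq L(HGr_1)$, I would take a $\delta$-derivation $A\underset{\pi_1}{\Rightarrow}H_1\underset{\pi_2}{\Rightarrow}\dots\underset{\pi_k}{\Rightarrow}H_k$ with $\pi_1,\dots,\pi_{k-1}$ recursive $A$-productions (hence each $\pi_i\in\{\rho_1,\dots,\rho_K\}$) and $\pi_k=\gamma$ a non-recursive $A$-production, exactly as set up before the construction. The claim to establish by induction on $k$ is that the same graph $H_k$ is produced in $HGr_1$ by the \emph{reversed} sequence: one application of $\nu_{1,\gamma,f,g}$ (laying down the patch $G=rhs(\gamma)$), then applications of the Type III productions corresponding to $\pi_{k-1},\dots,\pi_2$ (growing the patch from inside to outside), and finally $\nu_{2,\pi_1}$ (attaching the boundary coming from $\rho=\pi_1$). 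The induction invariant is that after each reversed step the current graph coincides with the corresponding forward graph up to a single distinguished edge, and that the label $(A,f,g)$ of that edge records precisely which external nodes of the current partial graph are to be identified with external nodes on the next recursive step (via $f$) and on the final step (via $g$); the side condition $g^\prime\circ f^\prime=g$ built into the Type III productions is exactly what keeps this record consistent from one step to the next.

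For the converse $L(HGr_1)\subseteq L(HGr)$, I would observe that the new nonterminals $(A,f,g)$ never occur in a terminal graph, and that among all productions only those of Type II erase an $(A,f,g)$-edge without introducing a fresh one, while such an edge can be introduced only through a Type I or a Type III production. Hence in any complete $HGr_1$-derivation each introduction of an $(A,f,g)$-edge originates in a Type I production and must eventually be resolved by a matching Type II production, with a (possibly empty) sequence of Type III productions in between. Reading such a block (a Type I production, then the Type III productions, then a Type II production) back-to-front yields a genuine $\delta$-derivation of $HGr$ built from $\gamma$ and recursive productions among $\rho_1,\dots,\rho_K$, producing the identical graph; substituting these blocks back into the shared context gives the required $HGr$-derivation.

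The main obstacle will be the external-node bookkeeping rather than the overall shape of the argument. The remarks accompanying the construction already confirm type-correctness of each new production, but the substantive work is to verify that the successive fusions performed by the replacement operation in the reversed order reproduce exactly the node identifications of the forward $\delta$-derivation. Concretely, I expect to track each node of $H_k$ through both derivations and to check, using the definition of $f_0$, of the sets $M_1,M_2$, of the index-ordered sets $\{j_1,\dots,j_{t-k}\}_{IO}$ and $\{x_1,\dots,x_{t-k-p}\}$, and of the composition of partial bijections, that an external node is shared between two consecutive steps in the forward derivation if and only if the corresponding tentacle identifications in the reversed derivation fuse the matching nodes. This case analysis, carried out once for the passage from Type I to Type III and once for the passage from Type III to Type II, is the technically heavy core; the rest follows by the induction and the context-freeness reduction described above.
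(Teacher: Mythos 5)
Your proposal is correct and follows essentially the same route as the paper: both directions invert a maximal $\delta$-chain of recursive $A$-productions, replaying it as a Type~I rule, a sequence of Type~III rules in reverse order, and a closing Type~II rule, with the pair $(f,g)$ recording the external-node correspondences and the condition $g^\prime\circ f^\prime=g$ maintaining the invariant (this is exactly the paper's relation $g_i\circ f_i=g_{i-1}$), and the converse direction likewise decomposes any $HGr_1$-derivation into such blocks and reads them back-to-front. The external-node bookkeeping you flag as the heavy core is also left at a comparable level of detail in the paper's own argument.
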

	\begin{proof}
		Informally, it suffices to notice that the design of new productions allows one to invert productions in the way which is shown on Figure \ref{fig_rederivation}. Example \ref{app_ex} provides an illustration of the formal proof, which is given below.
		
		Firstly, we check that $L(HGr)\subseteq L(HGr_1)$. In order to do that, we show how to model a branch of a derivation that consists of rules of the form $\rho_i$ and finishes with a rule $\gamma_j$ by means of $HGr_1$. Let $A\underset{\rho_{i_1}}{\Rightarrow} H_1\underset{\rho_{i_2}}{\Rightarrow} \dots \underset{\rho_{i_m}}{\Rightarrow} H_m \underset{\gamma_j}{\Rightarrow} H$ be a $\delta$-derivation in $HGr$ where $H_k=H_{k-1}[R_k/\delta(\rho_{i_{k-1}})], 1\le k < m$ and $H=H_m[G/\delta(\rho_{i_{m}})]$. It is convenient to put $H_0:=\circledcirc(A)$ and say that $\delta(\rho_{i_0}):=e_0$ where $E_{H_0}=\{e_0\}$. For the sake of simplyfing notations we write $\rho_1,\dots,\rho_m$ instead of $\rho_{i_1},\dots,\rho_{i_m}$ and $\gamma$ instead of $\gamma_j$.
		
		Let $G_k$ be defined inductively as follows: $G_{m+1}=G$; $G_k=R_k[G_{k+1}/\delta(\rho_k)]$. 
		Note that
		\[
		\begin{array}{lcl}
		H&=&H_m[G/\delta(\rho_m)]=\\&=&H_{m-1}[R_{m}/\delta(\rho_{m-1})][G/\delta(\rho_m)]=\\&\dots&\\&=&R_1[R_2/\delta(\rho_1)]\dots[R_m/\delta(\rho_{m-1})][G_{m+1}/\delta(\rho_m)]=\\&=&R_1[R_2/\delta(\rho_1)]\dots[G_{m}/\delta(\rho_{m-1})]=\\&\dots&\\&=&R_1[G_2/\delta(\rho_1)]=\\&=&G_1.\\
		\end{array}
		\]
		
		To prove the claimed inclusion, it suffices to show then that $A\underset{HGr_1}{\Rightarrow} G_1$.
		
		Observe that $H=H_{i-1}[R_i[G_{i+1}/\delta(\rho_i)]/\delta(\rho_{i-1})],\quad i=1,\dots,m$. Then $G_{i+1}$ can be considered as a subgraph of $G_i:=R_i[G_{i+1}/\delta(\rho_i)]$, which is a subgraph of $H$. Let $f_i$ be a partial function defined by the following correspondence: $f_i(j)=k\Leftrightarrow ext_{G_{i+1}}(j)=ext_{G_i}(k)$. Similarly we define $g_i$: $g_i(j)=k\Leftrightarrow ext_{G_i}(j)=ext_{H}(k)$. It follows from these definitions that $g_i\circ f_i = g_{i-1}$.
		
		Let $\sigma_i=A\to G_{i+1}$ and $G^\prime_i:=rhs(\nu_{1,\sigma_i,f_i,g_i})$.		
		We prove by the reverse induction on $p=m,\dots,1$ that then $A\underset{HGr_1}{\Rightarrow}G^\prime_p$. 
		
		\textbf{Basis.} Since $\sigma_m=A\to G=\gamma$ is not a recursive $A$-production, we added $\nu_{1,\sigma_m,f_m,g_m}$ in our grammar. This completes the basis case.
		
		\textbf{Step.} We assume by the induction hypothesis that $A\underset{HGr_1}{\Rightarrow}G^\prime_p$ and our aim is to prove the same for $p-1$. $G^\prime_p$ contains an edge labeled by $(A,f_p,g_p)$. Then a type III rule is applied to this edge: $\nu_{3,\rho_p,f_p,g_p,f_{p-1},g_{p-1}}$. The design of such rules allows us to perform exactly the substitution of $G_{p+1}$ into $R_p$. 
		
		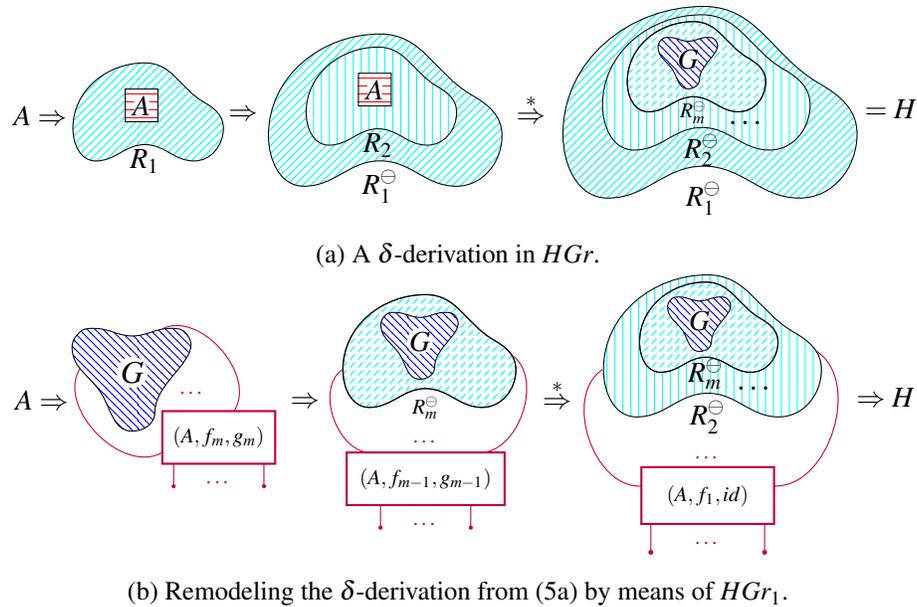
\begin{figure}
			\centering
			\subfloat[A $\delta$-derivation in $HGr$.]{
			\label{fig_source_der}
			\begin{tikzpicture}
			\renewcommand{\scale}{0.87}
			\node (v0) at (0.5*\scale,0.5*\scale) {$A\Rightarrow$};
			\node (v1) at ($(v0)+(1.6*\scale,0.5*\scale)$) {};
			\begin{scope}[fill opacity=1]
			\filldraw[pattern=north east lines,pattern color=cyan] ($(v1)+(0*\scale,0.3*\scale)$) 
			to[out=0,in=150] ($(v1) + (0.9*\scale,-0.3*\scale)$) 
			to[out=-30,in=90] ($(v1) + (1.2*\scale,-0.7*\scale)$)
			to[out=-90,in=0] ($(v1) + (0.75*\scale,-1.2*\scale)$)
			to[out=180,in=0] ($(v1) + (0*\scale,-0.95*\scale)$)
			to[out=180,in=0] ($(v1) + (-0.75*\scale,-1.2*\scale)$)
			to[out=180,in=-90] ($(v1) + (-1.1*\scale,-0.7*\scale)$)
			to[out=90,in=180] ($(v1)+(0*\scale,0.3*\scale)$) 
			;
			\end{scope}
			\filldraw[color=white] ($(v1)+(-0.3*\scale,-0.6*\scale)$) rectangle ($(v1)+(0.2*\scale,-0.1*\scale)$);
			\filldraw[pattern=horizontal lines,pattern color=red] ($(v1)+(-0.3*\scale,-0.6*\scale)$) rectangle ($(v1)+(0.2*\scale,-0.1*\scale)$);
			\node (R) at ($(v1)+(0*\scale,-1.2*\scale)$) {$R_1$};
			\filldraw[color=white,text=black] ($(v1)+(-0.05*\scale,-0.35*\scale)$) circle [radius=0.12*\scale] node {$A$};

			\node (v01) at (3.6*\scale,0.5*\scale) {$\Rightarrow$};
			\renewcommand{\scale}{1.35}
			\node (v11) at ($(v01)+(1.35*\scale,0.5*\scale)$) {};
			\begin{scope}[fill opacity=1]
			\filldraw[pattern=north east lines,pattern color=cyan] ($(v11)+(0*\scale,0.3*\scale)$) 
			to[out=0,in=150] ($(v11) + (0.9*\scale,-0.3*\scale)$) 
			to[out=-30,in=90] ($(v11) + (1.2*\scale,-0.7*\scale)$)
			to[out=-90,in=0] ($(v11) + (0.75*\scale,-1.2*\scale)$)
			to[out=180,in=0] ($(v11) + (0*\scale,-0.95*\scale)$)
			to[out=180,in=0] ($(v11) + (-0.75*\scale,-1.2*\scale)$)
			to[out=180,in=-90] ($(v11) + (-1.1*\scale,-0.7*\scale)$)
			to[out=90,in=180] ($(v11)+(0*\scale,0.3*\scale)$) 
			;
			\end{scope}
			\node (R1) at ($(v11)+(0*\scale,-1.2*\scale)$) {$R_1^\ominus$};
			
			\node (v12) at ($(3.65*\scale,0.8*\scale)$) {};
			\renewcommand{\scale}{0.87}
			\begin{scope}[fill opacity=1]
			\filldraw[color=white] ($(v12)+(0*\scale,0.3*\scale)$) 
			to[out=0,in=150] ($(v12) + (0.9*\scale,-0.3*\scale)$) 
			to[out=-30,in=90] ($(v12) + (1.2*\scale,-0.7*\scale)$)
			to[out=-90,in=0] ($(v12) + (0.75*\scale,-1.2*\scale)$)
			to[out=180,in=0] ($(v12) + (0*\scale,-0.95*\scale)$)
			to[out=180,in=0] ($(v12) + (-0.75*\scale,-1.2*\scale)$)
			to[out=180,in=-90] ($(v12) + (-1.1*\scale,-0.7*\scale)$)
			to[out=90,in=180] ($(v12)+(0*\scale,0.3*\scale)$) 
			;
			\end{scope}
			
			\begin{scope}[fill opacity=1]
			\filldraw[pattern=vertical lines,pattern color=cyan] ($(v12)+(0*\scale,0.3*\scale)$) 
			to[out=0,in=150] ($(v12) + (0.9*\scale,-0.3*\scale)$) 
			to[out=-30,in=90] ($(v12) + (1.2*\scale,-0.7*\scale)$)
			to[out=-90,in=0] ($(v12) + (0.75*\scale,-1.2*\scale)$)
			to[out=180,in=0] ($(v12) + (0*\scale,-0.95*\scale)$)
			to[out=180,in=0] ($(v12) + (-0.75*\scale,-1.2*\scale)$)
			to[out=180,in=-90] ($(v12) + (-1.1*\scale,-0.7*\scale)$)
			to[out=90,in=180] ($(v12)+(0*\scale,0.3*\scale)$) 
			;
			\end{scope}
			
			\filldraw[color=white] ($(v12)+(-0.3*\scale,-0.6*\scale)$) rectangle ($(v12)+(0.2*\scale,-0.1*\scale)$);
			\filldraw[pattern=horizontal lines,pattern color=red] ($(v12)+(-0.3*\scale,-0.6*\scale)$) rectangle ($(v12)+(0.2*\scale,-0.1*\scale)$);
			\node (R2) at ($(v12)+(0*\scale,-1.2*\scale)$) {$R_2$};
			\filldraw[color=white,text=black] ($(v12)+(-0.05*\scale,-0.35*\scale)$) circle [radius=0.12*\scale] node {$A$};

			\node (v02) at (8*\scale,0.6*\scale) {$\overset{\ast}{\Rightarrow}$};
			\renewcommand{\scale}{1.7}
			
			\node (v13) at ($(v02)+(1.35*\scale,0.5*\scale)$) {};
			\begin{scope}[fill opacity=1]
			\filldraw[pattern=north east lines,pattern color=cyan] ($(v13)+(0*\scale,0.3*\scale)$) 
			to[out=0,in=150] ($(v13) + (0.9*\scale,-0.3*\scale)$) 
			to[out=-30,in=90] ($(v13) + (1.2*\scale,-0.7*\scale)$)
			to[out=-90,in=0] ($(v13) + (0.75*\scale,-1.2*\scale)$)
			to[out=180,in=0] ($(v13) + (0*\scale,-0.95*\scale)$)
			to[out=180,in=0] ($(v13) + (-0.75*\scale,-1.2*\scale)$)
			to[out=180,in=-90] ($(v13) + (-1.1*\scale,-0.7*\scale)$)
			to[out=90,in=180] ($(v13)+(0*\scale,0.3*\scale)$) 
			;
			\end{scope}
			\node (R1) at ($(v13)+(0*\scale,-1.2*\scale)$) {$R_1^\ominus$};
			
			\node (v14) at ($(5.43*\scale,0.83*\scale)$) {};
			\renewcommand{\scale}{1.2}
			\begin{scope}[fill opacity=1]
			\filldraw[color=white] ($(v14)+(0*\scale,0.3*\scale)$) 
			to[out=0,in=150] ($(v14) + (0.9*\scale,-0.3*\scale)$) 
			to[out=-30,in=90] ($(v14) + (1.2*\scale,-0.7*\scale)$)
			to[out=-90,in=0] ($(v14) + (0.75*\scale,-1.2*\scale)$)
			to[out=180,in=0] ($(v14) + (0*\scale,-0.95*\scale)$)
			to[out=180,in=0] ($(v14) + (-0.75*\scale,-1.2*\scale)$)
			to[out=180,in=-90] ($(v14) + (-1.1*\scale,-0.7*\scale)$)
			to[out=90,in=180] ($(v14)+(0*\scale,0.3*\scale)$) 
			;
			\end{scope}
			
			\begin{scope}[fill opacity=1]
			\filldraw[pattern=vertical lines,pattern color=cyan] ($(v14)+(0*\scale,0.3*\scale)$) 
			to[out=0,in=150] ($(v14) + (0.9*\scale,-0.3*\scale)$) 
			to[out=-30,in=90] ($(v14) + (1.2*\scale,-0.7*\scale)$)
			to[out=-90,in=0] ($(v14) + (0.75*\scale,-1.2*\scale)$)
			to[out=180,in=0] ($(v14) + (0*\scale,-0.95*\scale)$)
			to[out=180,in=0] ($(v14) + (-0.75*\scale,-1.2*\scale)$)
			to[out=180,in=-90] ($(v14) + (-1.1*\scale,-0.7*\scale)$)
			to[out=90,in=180] ($(v14)+(0*\scale,0.3*\scale)$) 
			;
			\end{scope}
			
			\node at ($(v14)+(0*\scale,-1.2*\scale)$) {$R_2^\ominus$};
			
			\node at ($(v14)+(0.5*\scale,-0.9*\scale)$) {\large\bf $\dots$};
			
			\node (v15) at ($(7.6*\scale,1.2*\scale)$) {};
			\renewcommand{\scale}{0.8}
			\begin{scope}[fill opacity=1]
			\filldraw[color=black, fill=white] ($(v15)+(0*\scale,0.3*\scale)$) 
			to[out=0,in=150] ($(v15) + (0.9*\scale,-0.3*\scale)$) 
			to[out=-30,in=90] ($(v15) + (1.2*\scale,-0.7*\scale)$)
			to[out=-90,in=0] ($(v15) + (0.75*\scale,-1.2*\scale)$)
			to[out=180,in=0] ($(v15) + (0*\scale,-0.95*\scale)$)
			to[out=180,in=0] ($(v15) + (-0.75*\scale,-1.2*\scale)$)
			to[out=180,in=-90] ($(v15) + (-1.1*\scale,-0.7*\scale)$)
			to[out=90,in=180] ($(v15)+(0*\scale,0.3*\scale)$) 
			;
			\end{scope}

			\begin{scope}[fill opacity=1]
			\filldraw[pattern=NELines,pattern color=cyan] ($(v15)+(0*\scale,0.3*\scale)$) 
			to[out=0,in=150] ($(v15) + (0.9*\scale,-0.3*\scale)$) 
			to[out=-30,in=90] ($(v15) + (1.2*\scale,-0.7*\scale)$)
			to[out=-90,in=0] ($(v15) + (0.75*\scale,-1.2*\scale)$)
			to[out=180,in=0] ($(v15) + (0*\scale,-0.95*\scale)$)
			to[out=180,in=0] ($(v15) + (-0.75*\scale,-1.2*\scale)$)
			to[out=180,in=-90] ($(v15) + (-1.1*\scale,-0.7*\scale)$)
			to[out=90,in=180] ($(v15)+(0*\scale,0.3*\scale)$) 
			;
			\end{scope}
			
			\node at ($(v15)+(0*\scale,-1.2*\scale)$) {\scriptsize $R_m^\ominus$};

			\node (v16) at (11*\scale,1.68*\scale) {};
			\renewcommand{\scale}{0.5}
			\begin{scope}[fill opacity=1]
			\filldraw[color=white] ($(v16)+(-0.2*\scale,0.2*\scale)$) 
			to[out=45,in=180] ($(v16) + (0*\scale,0.25*\scale)$) 
			to[out=0,in=180] ($(v16) + (0.5*\scale,0.15*\scale)$)
			to[out=0,in=180] ($(v16) + (1*\scale,0.25*\scale)$)
			to[out=0,in=135] ($(v16)+(1.2*\scale,0.2*\scale)$)
			to[out=-45,in=90] ($(v16)+(0.8*\scale,-0.866*\scale)$)
			to[out=270,in=0] ($(v16)+(0.5*\scale,-1.116*\scale)$)
			to[out=180,in=-60] ($(v16)+(0.1*\scale,-0.634*\scale)$)
			to[out=120,in=270] ($(v16)+(-0.3*\scale,0*\scale)$)
			to[out=90,in=-135] ($(v16)+(-0.2*\scale,0.2*\scale)$)
			;
			
			\filldraw[pattern=north west lines,pattern color=blue] ($(v16)+(-0.2*\scale,0.2*\scale)$) 
			to[out=45,in=180] ($(v16) + (0*\scale,0.25*\scale)$) 
			to[out=0,in=180] ($(v16) + (0.5*\scale,0.15*\scale)$)
			to[out=0,in=180] ($(v16) + (1*\scale,0.25*\scale)$)
			to[out=0,in=135] ($(v16)+(1.2*\scale,0.2*\scale)$)
			to[out=-45,in=90] ($(v16)+(0.8*\scale,-0.866*\scale)$)
			to[out=270,in=0] ($(v16)+(0.5*\scale,-1.116*\scale)$)
			to[out=180,in=-60] ($(v16)+(0.1*\scale,-0.634*\scale)$)
			to[out=120,in=270] ($(v16)+(-0.3*\scale,0*\scale)$)
			to[out=90,in=-135] ($(v16)+(-0.2*\scale,0.2*\scale)$)
			;
			\end{scope}		
			\filldraw[color=white,text=black] ($(v16)+(0.5*\scale,-0.334*\scale)$) circle [radius=0.25*\scale] node {$G$};
			
			\renewcommand{\scale}{0.87}
			\node at ($(13.5*\scale,0.6*\scale)$) {$=H$};
			
			\end{tikzpicture}
		}

		\subfloat[Remodeling the $\delta$-derivation from (\ref{fig_source_der}) by means of $HGr_1$.]{
			\label{fig_new_der}
			\begin{tikzpicture}
			\renewcommand{\scale}{1}
			\node (v0) at (0.5*\scale,0.5*\scale) {$A\Rightarrow$};
			\node (v1) at (1.2*\scale,1.2*\scale) {};
			\begin{scope}[fill opacity=1]
			\filldraw[pattern=north west lines,pattern color=blue] ($(v1)+(-0.2*\scale,0.2*\scale)$) 
			to[out=45,in=180] ($(v1) + (0*\scale,0.25*\scale)$) 
			to[out=0,in=180] ($(v1) + (0.5*\scale,0.15*\scale)$)
			to[out=0,in=180] ($(v1) + (1*\scale,0.25*\scale)$)
			to[out=0,in=135] ($(v1)+(1.2*\scale,0.2*\scale)$)
			to[out=-45,in=90] ($(v1)+(0.8*\scale,-0.866*\scale)$)
			to[out=270,in=0] ($(v1)+(0.5*\scale,-1.116*\scale)$)
			to[out=180,in=-60] ($(v1)+(0.1*\scale,-0.634*\scale)$)
			to[out=120,in=270] ($(v1)+(-0.3*\scale,0*\scale)$)
			to[out=90,in=-135] ($(v1)+(-0.2*\scale,0.2*\scale)$)
			;
			\end{scope}		
			\filldraw[color=white,text=black] ($(v1)+(0.5*\scale,-0.334*\scale)$) circle [radius=0.2*\scale] node {\large $G$};
			
			\node (Afg) at ($(v1)+(1.65*\scale,-1.2*\scale)$) {\scriptsize $(A,f_m,g_m)$};
			\draw[color=purple, text = black,thick] ($(Afg)+(-0.75*\scale,-0.35*\scale)$) rectangle ($(Afg)+(0.75*\scale,0.35*\scale)$);
			
			\draw[-,purple,opacity=1]  ($(Afg)+(0.2*\scale,0.35*\scale)$) to [bend right=75] 
			($(v1)+(0.9*\scale,0.245*\scale)$);
			
			\node [purple] at ($(v1)+(1.3*\scale,-0.6*\scale)$) {\scriptsize \dots};
			
			\draw[-,purple,opacity=1]  ($(Afg)+(-0.75*\scale,-0.2*\scale)$) to [bend left=75] 
			($(v1)+(-0.2*\scale,-0.29*\scale)$);
			
			\draw[-,purple,opacity=1] ($(Afg)+(-0.6*\scale,-0.35*\scale)$) to ($(Afg)+(-0.6*\scale,-0.65*\scale)$);
			\filldraw[color=purple] ($(Afg)+(-0.6*\scale,-0.65*\scale)$) circle [radius=0.02*\scale] node {};
			
			\node [purple] at ($(Afg)+(0*\scale,-0.6*\scale)$) {\scriptsize \dots};
			
			\draw[-,purple,opacity=1] ($(Afg)+(0.6*\scale,-0.35*\scale)$) to ($(Afg)+(0.6*\scale,-0.65*\scale)$);
			\filldraw[color=purple] ($(Afg)+(0.6*\scale,-0.65*\scale)$) circle [radius=0.02*\scale] node {};

			\node (v02) at (4*\scale,0.465*\scale) {$\Rightarrow$};
			
			\node (v15) at ($(5.6*\scale,1.6*\scale)$) {};
			\renewcommand{\scale}{1}
			\begin{scope}[fill opacity=1]
			\filldraw[color=black, fill=white] ($(v15)+(0*\scale,0.3*\scale)$) 
			to[out=0,in=150] ($(v15) + (0.9*\scale,-0.3*\scale)$) 
			to[out=-30,in=90] ($(v15) + (1.2*\scale,-0.7*\scale)$)
			to[out=-90,in=0] ($(v15) + (0.75*\scale,-1.2*\scale)$)
			to[out=180,in=0] ($(v15) + (0*\scale,-0.95*\scale)$)
			to[out=180,in=0] ($(v15) + (-0.75*\scale,-1.2*\scale)$)
			to[out=180,in=-90] ($(v15) + (-1.1*\scale,-0.7*\scale)$)
			to[out=90,in=180] ($(v15)+(0*\scale,0.3*\scale)$) 
			;
			\end{scope}

			\begin{scope}[fill opacity=1]
			\filldraw[pattern=NELines,pattern color=cyan] ($(v15)+(0*\scale,0.3*\scale)$) 
			to[out=0,in=150] ($(v15) + (0.9*\scale,-0.3*\scale)$) 
			to[out=-30,in=90] ($(v15) + (1.2*\scale,-0.7*\scale)$)
			to[out=-90,in=0] ($(v15) + (0.75*\scale,-1.2*\scale)$)
			to[out=180,in=0] ($(v15) + (0*\scale,-0.95*\scale)$)
			to[out=180,in=0] ($(v15) + (-0.75*\scale,-1.2*\scale)$)
			to[out=180,in=-90] ($(v15) + (-1.1*\scale,-0.7*\scale)$)
			to[out=90,in=180] ($(v15)+(0*\scale,0.3*\scale)$) 
			;
			\end{scope}
			
			\node at ($(v15)+(0*\scale,-1.2*\scale)$) {\scriptsize $R_m^\ominus$};

			\node (v16) at (5.2*\scale,1.5*\scale) {};
			\renewcommand{\scale}{0.65}
			\begin{scope}[fill opacity=1]
			\filldraw[color=white] ($(v16)+(-0.2*\scale,0.2*\scale)$) 
			to[out=45,in=180] ($(v16) + (0*\scale,0.25*\scale)$) 
			to[out=0,in=180] ($(v16) + (0.5*\scale,0.15*\scale)$)
			to[out=0,in=180] ($(v16) + (1*\scale,0.25*\scale)$)
			to[out=0,in=135] ($(v16)+(1.2*\scale,0.2*\scale)$)
			to[out=-45,in=90] ($(v16)+(0.8*\scale,-0.866*\scale)$)
			to[out=270,in=0] ($(v16)+(0.5*\scale,-1.116*\scale)$)
			to[out=180,in=-60] ($(v16)+(0.1*\scale,-0.634*\scale)$)
			to[out=120,in=270] ($(v16)+(-0.3*\scale,0*\scale)$)
			to[out=90,in=-135] ($(v16)+(-0.2*\scale,0.2*\scale)$)
			;
			
			\filldraw[pattern=north west lines,pattern color=blue] ($(v16)+(-0.2*\scale,0.2*\scale)$) 
			to[out=45,in=180] ($(v16) + (0*\scale,0.25*\scale)$) 
			to[out=0,in=180] ($(v16) + (0.5*\scale,0.15*\scale)$)
			to[out=0,in=180] ($(v16) + (1*\scale,0.25*\scale)$)
			to[out=0,in=135] ($(v16)+(1.2*\scale,0.2*\scale)$)
			to[out=-45,in=90] ($(v16)+(0.8*\scale,-0.866*\scale)$)
			to[out=270,in=0] ($(v16)+(0.5*\scale,-1.116*\scale)$)
			to[out=180,in=-60] ($(v16)+(0.1*\scale,-0.634*\scale)$)
			to[out=120,in=270] ($(v16)+(-0.3*\scale,0*\scale)$)
			to[out=90,in=-135] ($(v16)+(-0.2*\scale,0.2*\scale)$)
			;
			\end{scope}		
			\filldraw[color=white,text=black] ($(v16)+(0.5*\scale,-0.334*\scale)$) circle [radius=0.25*\scale] node {$G$};
			
			\renewcommand{\scale}{1}
			\node (Afg1) at ($(v15)+(0*\scale,-2.15*\scale)$) {\scriptsize $(A,f_{m-1},g_{m-1})$};
			\draw[color=purple, text = black,thick] ($(Afg1)+(-1.05*\scale,-0.35*\scale)$) rectangle ($(Afg1)+(1.05*\scale,0.35*\scale)$);
			
			\draw[-,purple,opacity=1]  ($(Afg1)+(0.7*\scale,0.35*\scale)$) to [bend right=75] 
			($(v15)+(1.14*\scale,-0.5*\scale)$);
			
			\node [purple] at ($(Afg1)+(0*\scale,0.5*\scale)$) {\scriptsize \dots};
			
			\draw[-,purple,opacity=1]  ($(Afg1)+(-0.7*\scale,0.35*\scale)$) to [bend left=75] 
			($(v15)+(-1.11*\scale,-0.7*\scale)$);
			
			\draw[-,purple,opacity=1] ($(Afg1)+(-0.6*\scale,-0.35*\scale)$) to ($(Afg1)+(-0.6*\scale,-0.65*\scale)$);
			\filldraw[color=purple] ($(Afg1)+(-0.6*\scale,-0.65*\scale)$) circle [radius=0.02*\scale] node {};
			
			\node [purple] at ($(Afg1)+(0*\scale,-0.6*\scale)$) {\scriptsize \dots};
			
			\draw[-,purple,opacity=1] ($(Afg1)+(0.6*\scale,-0.35*\scale)$) to ($(Afg1)+(0.6*\scale,-0.65*\scale)$);
			\filldraw[color=purple] ($(Afg1)+(0.6*\scale,-0.65*\scale)$) circle [radius=0.02*\scale] node {};

			\renewcommand{\scale}{1}
			\node at (7.35*\scale,0.58*\scale) {$\overset{\ast}{\Rightarrow}$};
			\renewcommand{\scale}{1.7}
			
			\node (v14) at ($(5.5*\scale,1.05*\scale)$) {};
			\renewcommand{\scale}{1.26}
			\begin{scope}[fill opacity=1]
			\filldraw[color=white] ($(v14)+(0*\scale,0.3*\scale)$) 
			to[out=0,in=150] ($(v14) + (0.9*\scale,-0.3*\scale)$) 
			to[out=-30,in=90] ($(v14) + (1.2*\scale,-0.7*\scale)$)
			to[out=-90,in=0] ($(v14) + (0.75*\scale,-1.2*\scale)$)
			to[out=180,in=0] ($(v14) + (0*\scale,-0.95*\scale)$)
			to[out=180,in=0] ($(v14) + (-0.75*\scale,-1.2*\scale)$)
			to[out=180,in=-90] ($(v14) + (-1.1*\scale,-0.7*\scale)$)
			to[out=90,in=180] ($(v14)+(0*\scale,0.3*\scale)$) 
			;
			\end{scope}
			
			\begin{scope}[fill opacity=1]
			\filldraw[pattern=vertical lines,pattern color=cyan] ($(v14)+(0*\scale,0.3*\scale)$) 
			to[out=0,in=150] ($(v14) + (0.9*\scale,-0.3*\scale)$) 
			to[out=-30,in=90] ($(v14) + (1.2*\scale,-0.7*\scale)$)
			to[out=-90,in=0] ($(v14) + (0.75*\scale,-1.2*\scale)$)
			to[out=180,in=0] ($(v14) + (0*\scale,-0.95*\scale)$)
			to[out=180,in=0] ($(v14) + (-0.75*\scale,-1.2*\scale)$)
			to[out=180,in=-90] ($(v14) + (-1.1*\scale,-0.7*\scale)$)
			to[out=90,in=180] ($(v14)+(0*\scale,0.3*\scale)$) 
			;
			\end{scope}
			
			\node at ($(v14)+(0*\scale,-1.2*\scale)$) {$R_2^\ominus$};
			
			\node at ($(v14)+(0.5*\scale,-0.9*\scale)$) {\large\bf $\dots$};
			
			\node (v17) at ($(7.4*\scale,1.45*\scale)$) {};
			\renewcommand{\scale}{0.8}
			\begin{scope}[fill opacity=1]
			\filldraw[color=black, fill=white] ($(v17)+(0*\scale,0.3*\scale)$) 
			to[out=0,in=150] ($(v17) + (0.9*\scale,-0.3*\scale)$) 
			to[out=-30,in=90] ($(v17) + (1.2*\scale,-0.7*\scale)$)
			to[out=-90,in=0] ($(v17) + (0.75*\scale,-1.2*\scale)$)
			to[out=180,in=0] ($(v17) + (0*\scale,-0.95*\scale)$)
			to[out=180,in=0] ($(v17) + (-0.75*\scale,-1.2*\scale)$)
			to[out=180,in=-90] ($(v17) + (-1.1*\scale,-0.7*\scale)$)
			to[out=90,in=180] ($(v17)+(0*\scale,0.3*\scale)$) 
			;
			\end{scope}

			\begin{scope}[fill opacity=1]
			\filldraw[pattern=NELines,pattern color=cyan] ($(v17)+(0*\scale,0.3*\scale)$) 
			to[out=0,in=150] ($(v17) + (0.9*\scale,-0.3*\scale)$) 
			to[out=-30,in=90] ($(v17) + (1.2*\scale,-0.7*\scale)$)
			to[out=-90,in=0] ($(v17) + (0.75*\scale,-1.2*\scale)$)
			to[out=180,in=0] ($(v17) + (0*\scale,-0.95*\scale)$)
			to[out=180,in=0] ($(v17) + (-0.75*\scale,-1.2*\scale)$)
			to[out=180,in=-90] ($(v17) + (-1.1*\scale,-0.7*\scale)$)
			to[out=90,in=180] ($(v17)+(0*\scale,0.3*\scale)$) 
			;
			\end{scope}
			
			\node at ($(v17)+(0*\scale,-1.23*\scale)$) { $R_m^\ominus$};

			\node (v18) at (11.22*\scale,2.17*\scale) {};
			\renewcommand{\scale}{0.53}
			\begin{scope}[fill opacity=1]
			\filldraw[color=white] ($(v18)+(-0.2*\scale,0.2*\scale)$) 
			to[out=45,in=180] ($(v18) + (0*\scale,0.25*\scale)$) 
			to[out=0,in=180] ($(v18) + (0.5*\scale,0.15*\scale)$)
			to[out=0,in=180] ($(v18) + (1*\scale,0.25*\scale)$)
			to[out=0,in=135] ($(v18)+(1.2*\scale,0.2*\scale)$)
			to[out=-45,in=90] ($(v18)+(0.8*\scale,-0.866*\scale)$)
			to[out=270,in=0] ($(v18)+(0.5*\scale,-1.116*\scale)$)
			to[out=180,in=-60] ($(v18)+(0.1*\scale,-0.634*\scale)$)
			to[out=120,in=270] ($(v18)+(-0.3*\scale,0*\scale)$)
			to[out=90,in=-135] ($(v18)+(-0.2*\scale,0.2*\scale)$)
			;
			
			\filldraw[pattern=north west lines,pattern color=blue] ($(v18)+(-0.2*\scale,0.2*\scale)$) 
			to[out=45,in=180] ($(v18) + (0*\scale,0.25*\scale)$) 
			to[out=0,in=180] ($(v18) + (0.5*\scale,0.15*\scale)$)
			to[out=0,in=180] ($(v18) + (1*\scale,0.25*\scale)$)
			to[out=0,in=135] ($(v18)+(1.2*\scale,0.2*\scale)$)
			to[out=-45,in=90] ($(v18)+(0.8*\scale,-0.866*\scale)$)
			to[out=270,in=0] ($(v18)+(0.5*\scale,-1.116*\scale)$)
			to[out=180,in=-60] ($(v18)+(0.1*\scale,-0.634*\scale)$)
			to[out=120,in=270] ($(v18)+(-0.3*\scale,0*\scale)$)
			to[out=90,in=-135] ($(v18)+(-0.2*\scale,0.2*\scale)$)
			;
			\end{scope}		
			\filldraw[color=white,text=black] ($(v18)+(0.5*\scale,-0.334*\scale)$) circle [radius=0.25*\scale] node {$G$};
			
			\renewcommand{\scale}{1.26}
			\node (Afg2) at ($(v14)+(0*\scale,-2.03*\scale)$) {\scriptsize $(A,f_{1},id)$};
			\draw[color=purple, text = black,thick] ($(Afg2)+(-0.7*\scale,-0.3*\scale)$) rectangle ($(Afg2)+(0.7*\scale,0.3*\scale)$);
			
			\draw[-,purple,opacity=1]  ($(Afg2)+(0.7*\scale,0.1*\scale)$) to [bend right=75] 
			($(v14)+(1.14*\scale,-0.5*\scale)$);
			
			\node [purple] at ($(Afg2)+(0*\scale,0.4*\scale)$) {\scriptsize \dots};
			
			\draw[-,purple,opacity=1]  ($(Afg2)+(-0.7*\scale,0.1*\scale)$) to [bend left=75] 
			($(v14)+(-1.11*\scale,-0.7*\scale)$);
			
			\draw[-,purple,opacity=1] ($(Afg2)+(-0.6*\scale,-0.3*\scale)$) to ($(Afg2)+(-0.6*\scale,-0.6*\scale)$);
			\filldraw[color=purple] ($(Afg2)+(-0.6*\scale,-0.6*\scale)$) circle [radius=0.02*\scale] node {};
			
			\node [purple] at ($(Afg2)+(0*\scale,-0.5*\scale)$) {\scriptsize \dots};
			
			\draw[-,purple,opacity=1] ($(Afg2)+(0.6*\scale,-0.3*\scale)$) to ($(Afg2)+(0.6*\scale,-0.6*\scale)$);
			\filldraw[color=purple] ($(Afg2)+(0.6*\scale,-0.6*\scale)$) circle [radius=0.02*\scale] node {};

			\node at ($(9.3*\scale,0.42*\scale)$) {$\Rightarrow H$};
			\end{tikzpicture}
			}
			\caption{Illustration of the proof of Lemma \ref{lem_rec}. External nodes are not depicted here.}\label{fig_rederivation}
		\end{figure}
		

		Therefore, $A\underset{HGr_1}{\Rightarrow}G^\prime_1$. Note that the right-hand side of this derivation contains an edge labeled by $(A,f_1,g_1)$. Since $G_1=H$, we see that $g_1$ is the identity function $id$. Thus we can apply the type II rule of the form $\nu_{2,\rho_1}$ whose left-hand side equals $(A,f_1,id)=(A,f_1,g_1)$. This rule is also designed in such a way that its application is equivalent to substitution of $G_2$ into $R_1$. This means that $A\underset{HGr_1}{\Rightarrow}G^\prime_1\Rightarrow R_1[G_2/\delta(\rho_1)]=G_1$, as required.
		
		Now, if there is an arbitrary derivation where the rule $\rho_i$ is applied ($1\le i\le K$), then we find the largest $\delta$-derivation within the derivation that includes this occurence $\rho_i$ and remodel it as described above. Thus, the above reasonings complete the proof for one of the inclusions.
		
		To prove the reverse inclusion, it suffices to show how to transform a branch of a derivation containing nonterminals of the form $(A,f,g)$ into a derivation in $HGr$. Let
		\begin{equation*}
		\begin{split}
			A\underset{\nu_{1,\gamma,f_m,g_m}}{\Rightarrow}
			G_{m+1}
			\underset{\nu_{3,\rho_m,f_{m},g_m,f_{m-1},g_{m-1}}}{\Rightarrow}
			G_m
			\underset{\nu_{3,\rho_{m-1},f_{m-1},g_{m-1},f_{m-2},g_{m-2}}}{\Rightarrow}\dots
			\underset{\nu_{3,\rho_2,f_2,g_2,f_1,g_1}}{\Rightarrow} 
			G_2 \underset{\nu_{2,\rho_{m}}}{\Rightarrow} G_1
		\end{split}
		\end{equation*}
		be such a branch (it has to be of a similar form). Then we can consider a derivation in $HGr$ of the form $A\underset{\rho_1}{\Rightarrow} H_1\underset{\rho_2}{\Rightarrow} \dots \underset{\rho_m}{\Rightarrow} H_m \underset{\gamma}{\Rightarrow} H$; the first part of the proof of this theorem allows us to conclude that $H=G_1$. Note that functions $f_i$ and $g_i$ in the first derivation can be arbitrary (they only have to satisfy the condition of type III rules); however, it is not hard to show that they are the same as the functions $f_i$ and $g_i$ that are built on the basis of the second derivation.
		
		This finishes the proof.
	\end{proof}
	There are two important observations regarding the procedure and the lemma above: 
	\begin{enumerate}
		\item After such a procedure there are no recursive $A$-productions;
		\item For each nonterminal $(A,f,g)$ there is no rule $\pi$ such that $\mu(\pi)=(A,f,g)$.
	\end{enumerate}
	
	Note that $HGr_1$ does not preserve all the properties that $HGr$ had. Namely, rules of Type II can contain only one edge. However, our further actions do not affect type II and type III rules so this does not matter. Besides, useless symbols can appear, which also does not bother us (we deleted them only to perform Transformations \ref{edgeless} and \ref{1edged}). This finishes the second step.
	
	\begin{example}\label{app_ex}
		Let a grammar contain two $A$-productions: the first one is recursive and the second one is not recursive. They are depicted below; $\delta$ of each production is represented by the red color.
		\\
		\begin{tikzpicture}
		\node[] (Start) {$\rho=A\to$};
		\node[node, right=2mm of Start,label=below:{\scriptsize $(1)$}] (N1) {};
		\node[node,right=10mm of N1] (N2) {};
		\node[hyperedge,color=red,right=6mm of N2] (E1) {$A$};
		\node[node,above right=4mm and 4mm of E1,label=right:{\scriptsize $(2)$}] (N3) {};
		\node[node,below right=4mm and 4mm of E1,label=right:{\scriptsize $(3)$}] (N4) {};
		
		\draw[->,black] (N1) -- node[above] {$B$} (N2);
		\draw[-,red] (N2) -- node[above] {\scriptsize 3} (E1);
		\draw[-,red] (N3) -- node[left] {\scriptsize 1} (E1);
		\draw[-,red] (N4) -- node[left] {\scriptsize 2} (E1);
		
		\node[right = 5cm of Start] (Start2) {$\gamma=A\to$};
		\node[node, right=6mm of Start2,label=left:{\scriptsize $(1)$}] (N21) {};
		\node[node,above right=8mm and 12mm of N21,label=right:{\scriptsize $(2)$}] (N22) {};
		\node[node,below right=8mm and 12mm of N21,label=right:{\scriptsize $(3)$}] (N23) {};
		
		\draw[->,black] (N21) -- node[above] {$C$} (N22);
		\draw[->,red] (N22) -- node[right] {$D$} (N23);
		\draw[->,black] (N23) -- node[below] {$E$} (N21);
		\end{tikzpicture}
		\\
		In this grammar the following $\delta$-derivation is possible:
		\\
		\begin{tikzpicture}
		\node[] (Start) {$A\Rightarrow$};
		\node[node, right=2mm of Start,label=below:{\scriptsize $(1)$}] (N1) {};
		\node[node,right=10mm of N1] (N2) {};
		\node[hyperedge,color=red,right=6mm of N2] (E1) {$A$};
		\node[node,above right=4mm and 4mm of E1,label=right:{\scriptsize $(2)$}] (N3) {};
		\node[node,below right=4mm and 4mm of E1,label=right:{\scriptsize $(3)$}] (N4) {};
		
		\draw[->,black] (N1) -- node[above] {$B$} (N2);
		\draw[-,red] (N2) -- node[above] {\scriptsize 3} (E1);
		\draw[-,red] (N3) -- node[left] {\scriptsize 1} (E1);
		\draw[-,red] (N4) -- node[left] {\scriptsize 2} (E1);
		
		\node[right = 3.7cm of Start] (Start2) {$\Rightarrow$};
		\node[node, right=2mm of Start2,label=below:{\scriptsize $(1)$}] (N21) {};
		\node[node,right=10mm of N21] (N22) {};
		\node[hyperedge,color=red,right=6mm of N22] (E21) {$A$};
		\node[node,above right=4mm and 4mm of E21] (N23) {};
		\node[node,below right=4mm and 4mm of E21,label=right:{\scriptsize $(3)$}] (N24) {};
		\node[node,above right=4mm and 4mm of N23,label=right:{\scriptsize $(2)$}] (N25) {};

		\draw[->,black] (N21) -- node[above] {$B$} (N22);
		\draw[->,black] (N25) -- node[above left] {$B$} (N23);
		\draw[-,red] (N22) -- node[above] {\scriptsize 2} (E21);
		\draw[-,red] (N23) -- node[left] {\scriptsize 3} (E21);
		\draw[-,red] (N24) -- node[left] {\scriptsize 1} (E21);
		
		\node[right = 4.1cm of Start2] (Start3) {$\Rightarrow$};
		\node[node, right=2mm of Start3,label=below:{\scriptsize $(1)$}] (N31) {};
		\node[node,right=10mm of N31] (N32) {};
		\node[hyperedge,color=red,right=6mm of N32] (E31) {$A$};
		\node[node,above right=4mm and 4mm of E31] (N33) {};
		\node[node,below right=4mm and 4mm of E31] (N34) {};
		\node[node,above right=4mm and 4mm of N33,label=right:{\scriptsize $(2)$}] (N35) {};
		\node[node,below right=4mm and 4mm of N34,label=right:{\scriptsize $(3)$}] (N36) {};

		\draw[->,black] (N31) -- node[above] {$B$} (N32);
		\draw[->,black] (N35) -- node[above left] {$B$} (N33);
		\draw[->,black] (N36) -- node[below left] {$B$} (N34);
		\draw[-,red] (N32) -- node[above] {\scriptsize 1} (E31);
		\draw[-,red] (N33) -- node[left] {\scriptsize 2} (E31);
		\draw[-,red] (N34) -- node[left] {\scriptsize 3} (E31);
		
		\node[right = 4.1cm of Start3] (Start4) {$\Rightarrow$};
		\end{tikzpicture}
		\\
		\begin{tikzpicture}
		\node[] (Start3) {$\Rightarrow$};
		\node[node, right=2mm of Start3,label=below:{\scriptsize $(1)$}] (N31) {};
		\node[node,right=10mm of N31] (N32) {};
		\node[rectangle, minimum width = 5mm, minimum height = 5mm, inner sep = 0mm,right=6mm of N32] (E31) {};
		\node[node,above right=4mm and 4mm of E31] (N33) {};
		\node[node,below right=4mm and 4mm of E31] (N34) {};
		\node[node,above right=4mm and 4mm of N33,label=right:{\scriptsize $(2)$}] (N35) {};
		\node[node,below right=4mm and 4mm of N34,label=right:{\scriptsize $(3)$}] (N36) {};

		\draw[->,black] (N31) -- node[above] {$B$} (N32);
		\draw[->,black] (N35) -- node[above left] {$B$} (N33);
		\draw[->,black] (N36) -- node[below left] {$B$} (N34);
		\draw[->,black] (N32) -- node[above] {$C$} (N33);
		\draw[->,red] (N33) -- node[right] {$D$} (N34);
		\draw[->,black] (N34) -- node[below] {$E$} (N32);
		\node[right=4.2cm of Start3] (End) {$=G$};
		\end{tikzpicture}
		\\
		In the new grammar this derivation turns into the following one:
		\\
		\begin{tabular}{>{\centering\arraybackslash}m{.1\linewidth} >{\centering\arraybackslash}m{.1\linewidth} >{\centering\arraybackslash}m{.1\linewidth} >{\centering\arraybackslash}m{.1\linewidth} >{\centering\arraybackslash}m{.1\linewidth} >{\centering\arraybackslash}m{.1\linewidth}
				>{\centering\arraybackslash}m{.1\linewidth}
			}
			$A\Rightarrow$ & 
			\begin{tikzpicture}
			\node[node] (N21) {};
			\node[node,above right=8mm and 12mm of N21] (N22) {};
			\node[node,below right=8mm and 12mm of N21] (N23) {};
			
			\draw[->,black] (N21) -- node[above] {$C$} (N22);
			\draw[->,black] (N22) -- node[right] {$D$} (N23);
			\draw[->,black] (N23) -- node[below] {$E$} (N21);
			
			\node[below right = 18mm and 20mm of N21] (Afg3) {$(A,f_3,g_3)$};
			\draw[color=blue, text = black,thick] ($(Afg3)+(-9mm,-4mm)$) rectangle ($(Afg3)+(9mm,4mm)$);
			
			\draw[-,blue] ($(Afg3)+(-9mm,0mm)$) to[bend left=40] node[below] {\scriptsize 1} (N21);
			\draw[-,blue] ($(Afg3)+(-9mm,4mm)$) -- node[below] {\scriptsize 3} (N23);
			\draw[-,blue] ($(Afg3)+(0mm,4mm)$) to[bend right=30] node[right] {\scriptsize 2} (N22);
			
			\node[node, label=below:{\scriptsize $(1)$}] (N24) at ($(Afg3)+(-6mm,-12mm)$) {};
			\node[node, label=below:{\scriptsize $(2)$}] (N25) at ($(Afg3)+(0mm,-12mm)$) {};
			\node[node, label=below:{\scriptsize $(3)$}] (N26) at ($(Afg3)+(6mm,-12mm)$) {};
			\draw[-,blue] ($(Afg3)+(-6mm,-4mm)$) -- node[left] {\scriptsize 4} (N24);
			\draw[-,blue] ($(Afg3)+(0mm,-4mm)$) -- node[left] {\scriptsize 5} (N25);
			\draw[-,blue] ($(Afg3)+(6mm,-4mm)$) -- node[left] {\scriptsize 6} (N26);
			\end{tikzpicture}
			&
			&
			$\Rightarrow$ & 
			\begin{tikzpicture}
			\node[node] (N21) {};
			\node[node,above right=8mm and 12mm of N21] (N22) {};
			\node[node,below right=8mm and 12mm of N21] (N23) {};
			\node[node,below right=4mm and 6mm of N23, label=above:{\scriptsize $(3)$}] (N24) {};
			
			\draw[->,black] (N21) -- node[above] {$C$} (N22);
			\draw[->,black] (N22) -- node[right] {$D$} (N23);
			\draw[->,black] (N23) -- node[below] {$E$} (N21);
			\draw[->,black] (N24) -- node[below] {$B$} (N23);
			
			\node[below right = 18mm and 25mm of N21] (Afg3) {$(A,f_2,g_2)$};
			\draw[color=blue, text = black,thick] ($(Afg3)+(-9mm,-4mm)$) rectangle ($(Afg3)+(9mm,4mm)$);
			
			\draw[-,blue] ($(Afg3)+(-9mm,0mm)$) to[bend left=40] node[below] {\scriptsize 2} (N21);
			\draw[-,blue] ($(Afg3)+(-9mm,4mm)$) -- node[below] {\scriptsize 1} (N24);
			\draw[-,blue] ($(Afg3)+(0mm,4mm)$) to[bend right=30] node[right] {\scriptsize 3} (N22);
			
			\node[node, label=below:{\scriptsize $(1)$}] (N25) at ($(Afg3)+(-5mm,-12mm)$) {};
			\node[node, label=below:{\scriptsize $(2)$}] (N26) at ($(Afg3)+(5mm,-12mm)$) {};
			\draw[-,blue] ($(Afg3)+(-5mm,-4mm)$) -- node[left] {\scriptsize 4} (N25);
			\draw[-,blue] ($(Afg3)+(5mm,-4mm)$) -- node[left] {\scriptsize 5} (N26);
			\end{tikzpicture}
			&
			&
			$\Rightarrow$
			\\
		\end{tabular}
		\\
		\begin{tabular}{>{\centering\arraybackslash}m{.1\linewidth} >{\centering\arraybackslash}m{.1\linewidth} >{\centering\arraybackslash}m{.1\linewidth} >{\centering\arraybackslash}m{.1\linewidth} >{\centering\arraybackslash}m{.1\linewidth}
				>{\centering\arraybackslash}m{.1\linewidth}
			}
			$\Rightarrow$ & 
			\begin{tikzpicture}
			\node[node] (N21) {};
			\node[node,above right=8mm and 12mm of N21] (N22) {};
			\node[node,below right=8mm and 12mm of N21] (N23) {};
			\node[node,below right=4mm and 6mm of N23, label=above:{\scriptsize $(3)$}] (N24) {};
			\node[node,above right=4mm and 6mm of N22, label=above:{\scriptsize $(2)$}] (N25) {};
			
			\draw[->,black] (N21) -- node[above] {$C$} (N22);
			\draw[->,black] (N22) -- node[right] {$D$} (N23);
			\draw[->,black] (N23) -- node[below] {$E$} (N21);
			\draw[->,black] (N24) -- node[below] {$B$} (N23);
			\draw[->,black] (N25) -- node[below] {$B$} (N22);
			
			\node[below right = 18mm and 25mm of N21] (Afg3) {$(A,f_1,g_1)$};
			\draw[color=blue, text = black,thick] ($(Afg3)+(-9mm,-4mm)$) rectangle ($(Afg3)+(9mm,4mm)$);
			
			\draw[-,blue] ($(Afg3)+(-9mm,0mm)$) to[bend left=40] node[below] {\scriptsize 3} (N21);
			\draw[-,blue] ($(Afg3)+(-9mm,4mm)$) -- node[below] {\scriptsize 2} (N24);
			\draw[-,blue] ($(Afg3)+(0mm,4mm)$) to[bend right=30] node[right] {\scriptsize 1} (N25);
			
			\node[node, label=below:{\scriptsize $(1)$}] (N26) at ($(Afg3)+(0mm,-12mm)$) {};
			\draw[-,blue] ($(Afg3)+(0mm,-4mm)$) -- node[left] {\scriptsize 4} (N26);
			\end{tikzpicture}
			&
			&
			$\Rightarrow G$
			\\
		\end{tabular}
		\\
		Here $f_3(1)=2,f_3(2)=3;f_1=f_2=f_3=g_2; g_3(1)=3;g_1=id$ are all the involved partial functions; it is easy to check that they actually satisfy conditions from definitions. Note that if $n$ is the total number of derivation steps, then $f_i$ specifies a relation between external nodes on the $(n-i)$-th and on the $(n-i+1)$-th steps and it is defined by a corresponding production in the old derivation. In this example, $f_i$ are all defined by the production $\rho$, thus they are the same. $g_i$ defines a relation between external nodes on the $(n-i+1)$-th step and external nodes at the end of a derivation (on the $n$-th step). As expected, $g_1$ is the identity function since $n=4$ and $(n-1+1)=n$, so $g_1$ specifies a relation between external nodes on the last and on the last step.
	\end{example}
	
\subsection{Eliminating Recursion and Convertion into the WGNF}
	The rest of the steps are similar to those in the string case. Let us start with $HGr$; let $N=\{A_1,\dots,A_m\}$. Then the following transformation is done.
	\begin{transform}[eliminating recursion]\label{elim_rec}
		\leavevmode
		\\
	\textbf{Input:} an HRG $HGr$ as at the beginning of the proof (without useless symbols or edgeless or chain productions).
	\\
	\textbf{Output:} an equivalent HRG $HGr^\prime$ with the same properties such that for each $\pi\in P$ with $lhs(\pi)=A_i$ either $\mu(\pi)=A_j$ and $i<j$ or $\mu(\pi)\in\Sigma$.
	\end{transform}
	\noindent
	\textbf{Method.}
	\begin{enumerate}
		\item Set $i=1$.
		\item\label{step_elim} Eliminate all the recursive $A_i$-productions according to Lemma \ref{lem_rec}. Now it is argued that if $\pi$ belongs to the set of productions and $lhs(\pi)=A_i$, then $\mu(\pi)=A_k$ for $k>i$ or $\mu(\pi)$ is terminal.
		\item If $i=m$, then a desired grammar is obtained. Otherwise, set $i=i+1$ and $j=1$.
		\item\label{step_j} Let $A_j\to H_1,\dots, A_j\to H_h$ be all the $A_j$ productions. Let $\pi=A_i\to G(e_0:A_j)$ be a production with $\delta(\pi)=e_0$. Then we can replace $\pi$ by the rules $A_i\to G[H_1/e_0], \dots,A_i\to G[H_h/e_0]$ without changing the generated language (see Lemma \ref{lemma_repl}). For these rules we set $\delta(A_i\to G[H_k/e_0])=\delta(A_j\to H_k),\;k=1,\dots,h$. It will now be the case (due to the previous steps of this transformation) that for each $A_j$-production $\pi$ $\mu(\pi)$ equals either a terminal or $A_k$ for $k>j$. Thus all the $A_i$-productions will have this property after such replacement.
		\item If $j=i-1$, go to step \ref{step_elim}; otherwise set $j=j+1$ and go to step \ref{step_j}.
	\end{enumerate}
	\qed
	
	Let $HGr^\prime=\langle N^\prime, \Sigma, P^\prime, S\rangle$ be a grammar that is obtained from $HGr$ after Transformation \ref{elim_rec}. Let $|N^\prime|=M$ (note that $M$ is really large). We already know that $HGr^\prime$ does not have recursions, i.e.~there are no $\delta$-derivations $A\underset{\pi_1}{\Rightarrow} H_1\underset{\pi_2}{\Rightarrow} \dots \underset{\pi_k}{\Rightarrow} H_k$ such that $\mu(\pi_k)=A$. Indeed, due to the second observation after Lemma \ref{lem_rec}: nonterminals of the form $(B,f,g)$ cannot participate in recursive $\delta$-derivations since they do not appear as values of $\mu$. For the remaining nonterminals this is a consequence of the result of Transformation \ref{elim_rec}. Thus we can put a linear order on $N^\prime$ in such a way that for $A\underset{\pi}{\Rightarrow} G(e_0:B)\in P^\prime$, if $\delta(\pi)=e_0$, then either $B$ is terminal or it is nonterminal and $A<B$.
	
	We define the final set of productions $\overline{P}$ as follows: $A\to G$ belongs to $\overline{P}$ if there is a $\delta$-derivation $A\underset{\pi_1}{\Rightarrow} H_1\underset{\pi_2}{\Rightarrow} \dots \underset{\pi_k}{\Rightarrow} H_k=G$ of the length $1\le k\le M$ such that $\mu(\pi_k)$ is terminal. Obviously, the last requirement yields that $G$ has at least one terminal edge.
	\begin{lemma}
		$\overline{HGr}=\langle N^\prime, \Sigma, \overline{P}, S\rangle$ is equivalent to $HGr^\prime$.
	\end{lemma}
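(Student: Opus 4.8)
The plan is to prove the two inclusions $L(\overline{HGr})\subseteq L(HGr^\prime)$ and $L(HGr^\prime)\subseteq L(\overline{HGr})$ separately, using the $\delta$-spine of a derivation as the substitute for the ``leftmost symbol'' in the final step of the string construction of \cite{Aho72}. The structural fact that makes everything finite should be recorded first. In any $\delta$-derivation $A\underset{\pi_1}{\Rightarrow}H_1\underset{\pi_2}{\Rightarrow}\dots\underset{\pi_k}{\Rightarrow}H_k$ in which $\mu(\pi_i)$ is nonterminal for $i<k$, the left-hand sides are strictly increasing: $\pi_{i+1}$ is applied to $\delta(\pi_i)$, so $lhs(\pi_{i+1})=\mu(\pi_i)$, and the linear order fixed on $N^\prime$ just before the lemma gives $lhs(\pi_i)<\mu(\pi_i)=lhs(\pi_{i+1})$ whenever the latter is nonterminal. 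These being $k$ distinct nonterminals among the $M$ elements of $N^\prime$, we get $k\le M$. Hence every $\delta$-derivation that keeps following distinguished edges is finite, has length at most $M$, and terminates precisely when some $\mu(\pi_k)$ becomes terminal (the absence of recursion forbids cycling). This is exactly what the bound $1\le k\le M$ in the definition of $\overline{P}$ captures: every maximal $\delta$-spine collapses to one rule of $\overline{P}$.

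For $L(\overline{HGr})\subseteq L(HGr^\prime)$ I would observe that each rule $A\to G\in\overline{P}$ is by definition witnessed by a $\delta$-derivation $A\overset{\ast}{\Rightarrow}G$ in $HGr^\prime$. Thus any derivation in $\overline{HGr}$ can be expanded into a derivation in $HGr^\prime$ by replacing each application of a rule of $\overline{P}$ with its witnessing $\delta$-derivation; that this does not change the generated graph is precisely context-freeness (Lemma \ref{lemma_repl}). In particular every terminal graph generated by $\overline{HGr}$ is generated by $HGr^\prime$.

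The converse inclusion is the substantive half. Given a derivation $\circledcirc(S)\overset{\ast}{\Rightarrow}G$ of a terminal graph in $HGr^\prime$, I would regard it as a derivation tree and cut the tree into $\delta$-spines: from the top production follow the distinguished edges $\delta(\pi_1),\delta(\pi_2),\dots$ until the distinguished edge first carries a terminal label. By the previous paragraph each such spine is a $\delta$-derivation of length $\le M$ ending with a terminal $\mu(\pi_k)$, so it collapses to a single rule $A\to G_{\mathrm{spine}}\in\overline{P}$. The nonterminal edges of $G_{\mathrm{spine}}$ are exactly the off-spine nonterminal edges produced along the way, and each of them — having a label in $N^\prime$ — heads a further $\delta$-spine; since $G$ is terminal and the whole tree is finite, this process bottoms out. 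Reassembling the collapsed spines into a tree yields a derivation in $\overline{HGr}$ that produces the same terminal graph $G$.

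The main obstacle I anticipate is the bookkeeping in this converse direction: one must show that collapsing a spine into a single $\overline{P}$-rule and then deriving its off-spine children reproduces exactly the original graph, not merely a structurally similar one. Concretely, the repeated edge replacements that build $G_{\mathrm{spine}}$ along the spine must be shown to commute with the independent subderivations hanging off the spine, so that the order in which off-spine edges are expanded is immaterial. I would discharge this by induction on the size of the derivation tree, invoking context-freeness (Lemma \ref{lemma_repl}) to freely reorder independent replacements; the base case is a one-step spine ending in a terminal distinguished edge, and the inductive step applies the hypothesis to each off-spine child. The delicate point is to verify that the attachment and external nodes of $G_{\mathrm{spine}}$ match those of the corresponding subgraph of $G$, which is routine once replacement is recognized as associative with respect to disjoint edges.
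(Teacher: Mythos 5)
Your proposal is correct and follows essentially the same route as the paper: the forward inclusion by expanding each $\overline{P}$-rule into its witnessing $\delta$-derivation, and the converse by extracting the maximal $\delta$-spine from the top of a derivation, bounding its length by $M$ via the strict increase of labels along the linear order on $N^\prime$, collapsing it to a single rule of $\overline{P}$, and recursing on the remaining nonterminal edges. The paper phrases the recursion as an induction on derivation length and leaves the commutation bookkeeping implicit via context-freeness, exactly as you anticipate.
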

	\begin{proof}
		Clearly, $L(\overline{HGr})\subseteq L({HGr}^\prime)$, because productions in $\overline{HGr}$ are composed of that in ${HGr}^\prime$. The reverse inclusion is proved by induction on the length $n$ of derivation in $HGr^\prime$ in more general form: $\langle N^\prime, \Sigma, \overline{P}, X\rangle$ is equivalent to $\langle N^\prime, \Sigma, P^\prime, X\rangle$ for each $X\in N^\prime$.
		\\
		\textbf{Basis. } If $X\to G\in P^\prime$ then it is also in $\overline{P}$ since $G$ is terminal.
		\\
		\textbf{Step. } Let $X\to H(e_0:Y)\overset{n-1}{\Rightarrow} G$ (where $e_0=\delta(X\to H(e_0:Y))$) be a derivation of a terminal graph $G$. This derivation contains a $\delta$-derivation $X\underset{\pi_1}{\Rightarrow} H_1=H(e_0:Y)\underset{\pi_2}{\Rightarrow} H_2 \underset{\pi_3}{\Rightarrow} H_3 \dots \underset{\pi_k}{\Rightarrow} H_k$ and let $\mu(\pi_i)=X_i$ (we also set $X_0=X$). Then $X_0<X_1<\dots<X_k$, $X_k$ is terminal and all $X_i$ are different; thus $k\le M$ and $X\to H_k\in \overline{P}$. Then the source derivation is rebuilt as follows: we apply $X\to H_k$ to $X$ and then do the same with nonterminal edges in $H_k$ using the induction hypothesis. This completes the proof.
	\end{proof}
	Thus $L(\overline{HGr})=L$. Now we have to overcome the following problem: the weak Greibach normal form requires right-hand sides of productions to have exactly one terminal edge while we can guarantee for $\overline{HGr}$ that there is at least one terminal edge. This problem is easily fixed: for each $a\in\Sigma$ we introduce a \emph{new} nonterminal symbol $T_a$ and we add rules of the form $T_a\to\circledcirc(a)$. Now, if $A\to G\in\overline{P}$ has more than one terminal symbol in the right-hand side, then we choose one of them and replace the other ones by corresponding nonterminals: $a:= T_a$. After this step the modified grammar $\overline{HGr}$ still generates $L$, and it is in the weak Greibach normal form.

\section{Algorithmic Complexity}\label{s_compl}
	In the string case polynomial-time algorithms of convertion into the Greibach normal form are known; one is interested whether they exist in the hypergraph case.
	
	It can be seen that the proof we have presented is horrible from the point of view of algorithmic complexity. The main problem is in the second step: the amount of partial bijections from the set $[1,t]$ to itself is equal to $(C_t^0)^20!+(C_t^1)^21!+(C_t^2)^22!+\dots+(C_t^t)^2t!$, which grows even faster then any exponential function. It is less obvious but another ``heavy'' part of the algorithm is the first step --- eliminating chain productions. One would hope that presence of these slow parts is the problem of our proof but this is not the case.
	\begin{example}
		Consider a grammar $HGr_3=\langle\{S\},\{a\},P_3,S\rangle$ where $P$ has 3 productions ($type(S)=n$, $type(a)=n$):
		\begin{enumerate}
			\item $S\to \langle\{v_1,\dots,v_n\},\{e_0\},att,lab,v_1v_2\dots v_n\rangle$, $lab(e_0)=S$, $att(e_0)=v_2v_1v_3v_4\dots v_n$;
			\item $S\to \langle\{v_1,\dots,v_n\},\{e_0\},att,lab,v_1v_2\dots v_n\rangle$, $lab(e_0)=S$, $att(e_0)=v_2v_3v_4\dots v_nv_1$;
			\item $S\to\circledcirc(a)$.
		\end{enumerate}
		Obviously, this grammar has a size $O(n)$. It generates graphs with one $a$-labeled edge such that attachment nodes of this edge are obtained from external nodes of the graph by a permutation composed of the transposition $(12)$ and of the cycle $(12\dots n)$. Since these permutations are generators of $S_n$, the language $L(HGr_3)$ contains $n!$ hypergraphs. All of them have exactly one hyperedge, so for all $H\in L(HGr_3)$ a production $S\to H$ has to belong to a grammar in the WGNF; therefore, even the number of productions grows faster than any exponent apart from the algorithm involved. 
	\end{example}
	This yields the following
	\begin{proposition}
		There is no exponential-time algorithm that takes an HRG generating an $ibHCFL$ and returns an equivalent HRG in the WGNF.
	\end{proposition}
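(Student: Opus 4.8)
The plan is to turn the grammar $HGr_3$ into an unconditional lower bound on the \emph{size} of any equivalent WGNF grammar, and then to observe that an algorithm cannot run faster than the time it needs merely to write down its output. Concretely, I would regard $HGr_3$ as a family indexed by $n$: its description has size $O(n)$, but I claim that every equivalent grammar in the WGNF contains at least $n!$ productions. Before anything else I would check that Theorem \ref{THEOREM} even applies, i.e.\ that $L(HGr_3)$ is an ibHCFL, so that an equivalent WGNF grammar exists and the statement is not vacuous. This is immediate: every $H\in L(HGr_3)$ has a single $a$-labelled edge and no isolated nodes, whence $isize(H)=0<1\cdot esize(H)$, so the boundedness constant $M=1$ works.

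The heart of the argument is a counting step. Suppose $HGr^\prime=\langle N^\prime,\{a\},P^\prime,S^\prime\rangle$ is in the WGNF with $L(HGr^\prime)=L(HGr_3)$. Pick any $H\in L(HGr_3)$; it has exactly one edge, necessarily terminal. If $S^\prime\overset{k}{\Rightarrow}H$, then by Remark \ref{rem_terminal_edges} the graph $H$ has exactly $k$ terminal edges, forcing $k=1$; hence $S^\prime\to H$ is itself a production of $HGr^\prime$. Because the $n!$ graphs of $L(HGr_3)$ realise the $n!$ distinct permutations of the (ordered) external nodes on the attachment of their unique edge, they are pairwise non-isomorphic, and so they demand $n!$ pairwise distinct productions. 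Thus $|P^\prime|\ge n!$, and the total size of $HGr^\prime$ is $\Omega(n!)$.

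Finally I would close the complexity gap. An algorithm running in exponential time halts within $2^{c\cdot s}$ steps on an input of size $s$, for some constant $c$; in particular it can write at most $2^{c\cdot s}$ symbols of output. For the input $HGr_3$ we have $s=O(n)$, so its output has size at most $2^{O(n)}$. Since $n!=2^{\Theta(n\log n)}$ eventually exceeds $2^{O(n)}$, no such algorithm can produce a correct answer for all sufficiently large $n$, and hence no exponential-time algorithm performs the required conversion.

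The main obstacle here is conceptual rather than computational: one must recognise that this is an \emph{output-size} lower bound, so it holds unconditionally, with no appeal to any complexity-theoretic assumption, and one must fix the intended reading of ``exponential time'' as single-exponential $2^{O(s)}$ in the input size. Under that reading the super-exponential growth of $n!$ is exactly what rules out any such algorithm; under the weaker reading $2^{\mathrm{poly}(s)}$ the bound $n!$ would no longer suffice, so the precise statement genuinely depends on this choice.
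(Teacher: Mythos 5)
Your proof is correct and follows essentially the same route as the paper: it uses the family $HGr_3$ as an output-size lower bound, arguing via Remark \ref{rem_terminal_edges} that each of the $n!$ pairwise non-isomorphic one-edge graphs in $L(HGr_3)$ forces its own production $S^\prime\to H$ in any equivalent WGNF grammar, so the output has size $\Omega(n!)$ while the input has size $O(n)$. The paper leaves the ibHCFL check, the non-isomorphism of the $n!$ graphs, and the time-versus-output-length step implicit; you have merely made these explicit.
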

	However, if we consider only HRGs of order $r$ for some fixed $r$ (i.e.~in which types of symbols are not greater than $r$), then the above example ceases to be a problem as well as the first and the second steps in our proof since the number of partial bijections from $[1;r]$ to itself (and, in particular, the size of $S_r$) is constant when $r$ is fixed.
	
\section{Conclusion}\label{s_concl}
	Our desire to introduce the weak Greibach normal form appeared due to another research about hypergraph basic categorial grammars (we presented a paper \cite{Pshenitsyn20} about them at ICGT-2020); in that work we introduce a new approach to describing graph languages and prove that each grammar in this new formalism can be transformed into an equivalent HRG in the WGNF and vice versa. Theorems \ref{Theorem} and \ref{THEOREM} show that ibHCFL is exactly the class of languages generated by grammars in the weak Greibach normal form; thus we also answer a question about the recognizing power of hypergraph basic categorial grammars --- they also generate exactly ibHCFL. Besides, we suppose that the issue of the WGNF is a fundamental theoretical question, and we have presented an answer in line with those in \cite{Engelfriet92,Jansen11}. Note that the result gives a natural grammar characterization of the language class ibHCFL.
	\vspace{5mm}
	\\
	\textbf{Acknowledgments.} I thank my scientific advisor prof. Mati Pentus and anonymous reviewers for their valuable advices.
	
\bibliographystyle{eptcs}
\bibliography{GNFHRG}
	
\end{document}